\providecommand{\bo}{\mathbf}
\providecommand{\bs}{\boldsymbol}
\providecommand{\cov}{\mathrm{COV}}
\providecommand{\mlt}{\mathrm{MLC}}
\providecommand{\mcd}{\mathrm{MCD}}
\providecommand{\diag}{\mathrm{diag}}
\providecommand{\var}{\mathrm{var}}
\newtheorem{prop}{Proposition}
\newtheorem{rem}{Remark}
\newtheorem{coro}{Corollary}
\journal{Computational Statistics \& Data Analysis}
\begin{document}

\begin{frontmatter}



\title{ICS for Multivariate Outlier Detection\\ with Application to Quality Control}

\author{Aurore Archimbaud\fnref{fn1}\fnref{ad1}}
\ead{aurore.archimbaud@tse-fr.eu}
\author{Klaus Nordhausen\fnref{ad2}}
\ead{klaus.nordhausen@tuwien.ac.at}

\author{Anne Ruiz-Gazen\fnref{ad1}\corref{cor1}}
\ead{anne.ruiz-gazen@tse-fr.eu}
\cortext[cor1]{Corresponding author}

\address[ad1]{Toulouse School of Economics, University of Toulouse 1 Capitole,  \\ 21 all\'ee de Brienne, 31015 Toulouse cedex 6, France}
\address[ad2]{CSTAT - Computational Statistics, Institute of Statistics \& Mathematical Methods in Economics
Vienna University of Technology, Wiedner Hauptstr. 7, A-1040 Vienna, Austria}

\begin{abstract}
In high reliability standards fields such as automotive, avionics or aerospace, the detection of anomalies is crucial.
An efficient methodology for automatically detecting multivariate outliers is introduced.
It takes advantage of the remarkable properties of the Invariant Coordinate Selection (ICS) method.
Based on the simultaneous spectral decomposition of two scatter matrices, ICS leads to an affine invariant coordinate system
in which the Euclidian distance corresponds to a Mahalanobis Distance (MD) in the original coordinates.
The limitations of MD are highlighted using theoretical arguments in a context where the dimension of the data is large.
Unlike MD, ICS makes it possible to select relevant components which removes the limitations.
Owing to the resulting dimension reduction, the method is expected to improve the power of outlier detection rules such as MD-based criteria.
It also greatly simplifies outliers interpretation.
The paper includes practical guidelines for using ICS in the context of a small proportion of outliers
which is relevant in high reliability standards fields.
The choice of scatter matrices together with
the selection of relevant invariant components through
parallel analysis and normality tests are addressed.
The use of the regular covariance matrix and the so called matrix of fourth moments as the scatter pair is recommended.
This choice combines the simplicity of implementation together with the possibility to derive theoretical results.
A simulation study confirms the good properties of the proposal and compares it with other scatter pairs.
This study also provides a comparison with Principal Component Analysis and MD.
The performance of our proposal is also evaluated on several real data sets using a user-friendly R package accompanying the paper.
\end{abstract}

\begin{keyword}
Affine Invariance \sep Mahalanobis Distance \sep Principal Component Analysis \sep Scatter Estimators \sep  Unsupervised Outlier Identification.



\end{keyword}

\end{frontmatter}





\section{Introduction}
\label{sec:intro}

Detecting outliers in multivariate data sets is of particular interest in many physical \citep{beckman1983outlier}, industrial, medical and financial applications \citep{aggarwal2017}.
Some classical statistical detection methods are based on the Mahalanobis distance and its robust counterparts (see e.g. \citet{rousseeuw1990unmasking}, \citet{cerioli2009controlling}, \citet{cerioli2010multivariate}) or on robust principal component analysis (see e.g \citet{hubert2005robpca}).
One advantage of the Mahalanobis distance is its affine invariance while Principal Component Analysis (PCA) is only invariant under orthogonal transformations.
For its part, PCA allows some components selection and facilitates the interpretation of the detected outliers.
All these methods are adapted to the context of casewise contamination while other methods are adapted to the case of cellwise contamination (see e.g. \citet{agostinelli2015robust} and \citet{rousseeuw2016detecting}). Furthermore, several other recent references tackle the problem of outlier detection in high dimension where the number of observations may be smaller than the number of variables (see e.g. \citet{croux2013robust} and \citet{hubert2016sparse}).

In the present paper, we propose an alternative to the Mahalanobis distance and to PCA, in a casewise contamination context and when the number of observations is larger than the number of variables.
As stated in \cite{tarr2016} on page 405: ``the cellwise contamination is prevalent in large, automatically generated data sets,
found in data mining and bioinformatics, where there is often little quality control over the inputs''. In the present paper, the focus is on applications with high level of quality control, such as in the automotive, avionics or aerospace fields, where only a small proportion of outliers, up to 2\%, is plausible.
From our experience in such application fields, a small proportion of parts potentially  defective are to be detected with very limited false detection.
Moreover, even if in such fields the trend is to increase the number of measurements, there are still many applications where the number of observations is larger than the number of variables and, in such a context, an improved affine invariant method with an easy characterization of the outliers is still of interest.

The method we consider is the Invariant Coordinate Selection (ICS) as proposed by \citet{Tyler2009}.
The principle of ICS is quite similar to Principal Component Analysis (PCA) with coordinates or components derived from an eigendecomposition followed by a projection of the data on
selected eigenvectors. However, ICS differs in many respects from PCA. It relies on the simultaneous spectral decomposition of two scatter matrices instead of one for PCA.
While principal components are orthogonally invariant but scale dependent, the invariant components are affine invariant for affine equivariant scatter matrices.
Moreover, under some elliptical mixture models, the Fisher's linear discriminant subspace coincides with a subset of invariant components
in the case where group identifications are unknown (see Theorem 4 in \cite{Tyler2009}).  This remarkable property is of interest for outlier detection since
outliers can be viewed as data observations that differ from the remaining data and form separate clusters.

Despite its attractive properties, ICS has not been extensively studied in the literature on outlier detection.
An early version of ICS was proposed in \citet{caussinus1990interesting} for multivariate outlier detection and studied further in e.g. \citet{penny1999multivariate}
and \citet{caussinus2003projections} for two specific scatter matrices.
Recent articles by \citet{ICS} and \citet{Tyler2009} argue that ICS is useful for outlier detection. However, a thorough evaluation of ICS in
this context is still missing and the present paper is a first step aimed at filling the gap.

Our first objective is to explain the link between ICS and the Mahalanobis distance. First, we prove that Euclidian distances calculated using all invariant components
are equivalent to Mahalanobis distances calculated using the original variables. Then, in the case where the number of variables is large (but still with a larger number of observations) and outliers are contained in a
small dimensional subspace, we recommend selecting a small number of invariant components. Such a selection is motivated by looking at the approximate probability  in large
dimension of the difference between the Mahalanobis distance of an outlying observation and the Mahalanobis distance of an observation from the majority group. We prove
that this probability decreases toward zero when the dimension increases which is undesirable. This shortcoming can be avoided by a proper selection of invariant components.

Then, we focus on the case where the majority of the data behaves in a regular way and only a small fraction of the data might be  considered outliers.
Examples include, for instance, financial fraud detection or production error identification in industrial processes where there is a high level of quality control.
Our goal is to provide practical guidelines for using ICS in this context of unsupervised detection of a small proportion of outliers.
More precisely, we implement and compare different pairs of scatter matrices estimators and different methods for selecting relevant invariant components through
an extensive simulation study. We consider several contamination models with a percentage of contamination equal to 2\%, which is relevant in the context of high reliability standards fields.
Results are given in terms of true positive and false negative discoveries for several mixture models.
We advocate a simple choice for the scatter matrices pair, namely the covariance and the fourth moment matrices.
Such estimators are simple to implement and some theoretical results can be derived for some particular mixtures as detailed in \ref{sec:CompChap2}.
Regarding components selection, we recommend two methods: the so-called parallel analysis
\citep{PeresNetoJacksonSomers2005} and a skewness-based normality test.
We also show that our proposal improves over the Mahalanobis distance criterion and over different versions of PCA through simulations and the use of three real data sets.
One of the key benefits of our approach compared to competitors is its ability not to detect outliers when there is no outlier present in the data set, at least in the Gaussian case.
When outliers are absent, the proposed procedure is likely to select none of the invariant components. Another practical benefit, as illustrated on one of the three real examples,
is the ease of interpretation of the detected outliers using the selected invariant coordinates. Mimicking PCA, the user can draw some scatter plots of the invariant components or
look at the correlations between the invariant components and the original variables. More complex procedures (advocated for instance in \citet{willems2009}) when using the
Mahalanobis distance can thereby be avoided.

This article is organized as follows. In Section \ref{sec_mdld} we observe the behavior of the usual and the robust Mahalanobis distances for large dimensions when
outliers lie in a small dimensional subspace. This result motivates the use of selected invariant components for outlier detection. ICS is described in a general
framework in Section \ref{sec_ics} and in the context of a small proportion of outliers in Section \ref{sec_icsout}. Section \ref{sec_simu} provides results from a
simulation study and derives practical guidelines for the choice of the scatter matrices pair and the components selection method. Comparisons with the Mahalanobis distance
and PCA are also provided. Three real data sets are analyzed in Section \ref{sec_real}.
Finally, conclusions and perspectives are drawn in Section \ref{sec_concl}. The proof of Proposition \ref{propmd} is given in \ref{proofprop1} while some additional propositions are given in \ref{sec:CompChap2}.
Supplementary material is also provided. It contains  some scatterplot matrices to visualize the six simulated data sets and the R code to generate these data sets. It also includes
the R code to reproduce the results of Table 4 for the Reliability data and the HTP data sets.

\section{Behavior of the Mahalanobis distance in large dimension}
\label{sec_mdld}
Let $\bo X=(X_1,\ldots,X_p)'$ be a $p$-multivariate real random vector and assume the distribution of $\bo X$ is a mixture of $(q+1)$ Gaussian
distributions with $q+1 < p$, different location parameters $\bs\mu_h$, for $h=0,\ldots,q$, and the same definite positive covariance matrix $\bs\Sigma_W$:
\begin{equation}
{\bo X} \sim (1-\epsilon) \, {\cal N}(\bs\mu_0,\bs\Sigma_W) + \sum_{h=1}^q \epsilon_h \, {\cal N}(\bs\mu_h,\bs\Sigma_W)
 \label{model}
\end{equation}
where $\displaystyle \epsilon =  \sum_{h=1}^q \epsilon_h < 1/2$.

Such a distribution can be interpreted as a model for outliers where the majority of the data follows a given Gaussian distribution and outliers are clustered in $q$ clusters with
Gaussian distributions with different locations than the majority group. This model is a generalization of the well-known mean-shift outlier model to more than two groups.

For such a model, the mean is $\bs\mu_{\bo X} = (1-\epsilon) \, \bs\mu_0 + \sum_{h=1}^q \epsilon_h \bs\mu_h$, the within covariance matrix is $\bs\Sigma_W$, the between covariance is
$\bs\Sigma_B = (1-\epsilon)(\bs\mu_0-\bs\mu_{\bo X})(\bs\mu_0-\bs\mu_{\bo X})' + \sum_{h=1}^q \epsilon_h (\bs\mu_h-\bs\mu_{\bo X})(\bs\mu_h-\bs\mu_{\bo X})'$,
where the prime symbol denotes the transpose vector or matrix, and the total covariance matrix is
$\bs\Sigma=\bs\Sigma_B + \bs\Sigma_W$.
Let us consider the following squared Mahalanobis distances:
\begin{eqnarray}
d^2(\bo X) & = & (\bo X-\bs\mu_{\bo X})'{\bs\Sigma}^{-1}(\bo X-\bs\mu_{\bo X})\\
d_R^2(\bo X) & = & (\bo X-\bs\mu_0)'{\bs\Sigma_W}^{-1}(\bo X-\bs\mu_0).
\end{eqnarray}
These distances are affine invariant in the sense that $d^2(\bo A \bo X + \bo b)=d^2(\bo X)$ and $d_R^2(\bo A \bo X + \bo b)=d_R^2(\bo X)$,
for  any full rank $p\times p$ matrix $\bo A$ and  any $p$-vector $\bo b$. The distance $d$ (resp. $d_R$) can be interpreted as a non-robust (resp. robust) Mahalanobis distance. Of course in practice, the different parameters are unknown and
should be estimated, but the results we derive below give some intuition for the finite sample case.
Let us now introduce distinct $p$-random vectors that would correspond to the different mixture components of $\bo X$. Let $\bo X_{no}$, where $no$ stands for ``non-outlier'',
follows a normal distribution ${\cal N}(\bs\mu_0,\bs\Sigma_W)$ and $\bo X_{o,h}$, where
$o$ stands for ``outlier'', follows a normal distribution
${\cal N}(\bs \mu_h,\bs\Sigma_W)$, with  $h=1,\ldots,q$. We assume that $\bo X_{no}$ and $\bo X_{o,h}$,  for $h=1,\ldots,q$,
are independent, and we are interested in the behavior of the difference between the squared distance of $\bo X_{o}$ and of $\bo X_{no}$ for both Mahalanobis distances,
when dimension $p$ increases. The distribution of these differences is not easy to handle especially for the non-robust distance, but we can look at the asymptotic
distribution for large $p$.
When using the Mahalanobis distance or robust distance for outlier identification, we expect the probability of these differences to be large.

Under the mixture distribution defined previously, we have the following proposition. Its proof makes use of the Lindeberg-Feller central limit theorem for $p$ going to infinity and is given in \ref{proofprop1}.\\

\begin{prop}\label{propmd}
Assume that $q$ is fixed, then under model (\ref{model}):
$$\frac{1}{2\sqrt{p}} \left(d^2(\bo X_{o,h}) -d^2(\bo X_{no}) - \mathbb{E}\left( d^2(\bo X_{o,h}) -d^2(\bo X_{no}) \right) \right)$$
and
$$ \frac{1}{2\sqrt{p}} \left(d_R^2(\bo X_{o,h}) -d_R^2(\bo X_{no})- \mathbb{E}\left( d_R^2(\bo X_{o,h}) -d_R^2(\bo X_{no})\right)  \right)$$
converge in distribution to a standard Gaussian distribution when $p$ goes to infinity and the expectations
$\mathbb{E}\left( d^2(\bo X_{o,h})\right.$ $\left.-d^2(\bo X_{no}\right) $ and $\mathbb{E}\left( d_R^2(\bo X_{o,h}) -d_R^2(\bo X_{no})\right)$ do not depend on $p$.
\end{prop}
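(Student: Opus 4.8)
Both $d^2$ and $d_R^2$ are affine invariant and the mixture model (\ref{model}) is mapped to a model of the same form under $\bo X\mapsto\bs\Sigma_W^{-1/2}\bo X$, so the plan is to first normalize to $\bs\Sigma_W=\bo I_p$; under this normalization the requirement that the expectations not depend on $p$ is to be read as the standing assumption of the section, namely that as $p$ grows the shift vectors $\bs\delta_h:=\bs\mu_h-\bs\mu_0$ form a fixed configuration, so that $\|\bs\delta_h\|^2$, $\bs\nu_0:=\bs\mu_0-\bs\mu_{\bo X}=-\sum_{h=1}^q\epsilon_h\bs\delta_h$, the matrix $\bs\Sigma_B$ and its finitely many nonzero eigenpairs are all the same for every $p$ (``outliers in a low-dimensional subspace''). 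Writing $\bo X_{no}=\bs\mu_0+\bo Z$ and $\bo X_{o,h}=\bs\mu_h+\bo Z'$ with $\bo Z,\bo Z'$ independent $\mathcal{N}(\bo 0,\bo I_p)$, I would treat the robust distance first and reduce the non-robust one to it.

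For the robust distance, $d_R^2(\bo X_{o,h})-d_R^2(\bo X_{no})=\sum_{i=1}^p\bigl[(\delta_{h,i}+Z_i')^2-Z_i^2\bigr]$ is a sum of $p$ independent summands with mean $\|\bs\delta_h\|^2$ (free of $p$) and variance $4p+4\|\bs\delta_h\|^2$. The plan is to invoke the Lindeberg--Feller theorem for this triangular array: since $|\delta_{h,i}|\le\|\bs\delta_h\|$ is bounded and Gaussian moments are universal, the summands have fourth central moments bounded uniformly in $i$ and $p$, so their sum is of order $p$ while the squared total variance is of order $p^2$, which verifies Lyapunov's---hence Lindeberg's---condition. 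Replacing the exact standard deviation $2\sqrt{p+\|\bs\delta_h\|^2}$ by $2\sqrt p$ is harmless in the limit (their ratio tends to $1$), and Slutsky's lemma yields convergence to $\mathcal{N}(0,1)$.

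For the non-robust distance the key remark is that $\bs\Sigma=\bo I_p+\bs\Sigma_B$ is a fixed finite-rank perturbation of the identity, so $\bs\Sigma^{-1}=\bo I_p-\sum_j\frac{\gamma_j}{1+\gamma_j}\bo v_j\bo v_j'$ for a fixed orthonormal family $\bo v_j$ spanning the range of $\bs\Sigma_B$ and fixed $\gamma_j>0$; moreover $\bs\nu_0$ and each $\bs\mu_h-\bs\mu_{\bo X}$ lie in that same fixed subspace. Expanding $d^2(\bo X)=(\bo X-\bs\mu_{\bo X})'\bs\Sigma^{-1}(\bo X-\bs\mu_{\bo X})$ with $\bo X_{no}-\bs\mu_{\bo X}=\bo Z+\bs\nu_0$ (and analogously for $\bo X_{o,h}$), I would write $d^2(\bo X_{o,h})-d^2(\bo X_{no})=\bigl(d_R^2(\bo X_{o,h})-d_R^2(\bo X_{no})\bigr)+(R_{o,h}-R_{no})$, where each $R$ gathers only recentering terms of the type $2\bs\nu_0'\bo Z$ plus constants (Gaussians with $p$-free variance) and the finitely many rank-one corrections $-\frac{\gamma_j}{1+\gamma_j}\bigl(\bo v_j'(\bo Z+\mathrm{const})\bigr)^2$ (bounded-parameter scaled noncentral $\chi^2_1$ variables). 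Hence the joint law of $(R_{o,h},R_{no})$ does not change with $p$, so $(R_{o,h}-R_{no})-\mathbb{E}(R_{o,h}-R_{no})=O_p(1)$ and, once divided by $2\sqrt p$, tends to $0$ in probability; combined with the robust case through Slutsky this gives the $\mathcal{N}(0,1)$ limit. For the expectations one gets $\mathbb{E}\bigl(d_R^2(\bo X_{o,h})-d_R^2(\bo X_{no})\bigr)=\|\bs\delta_h\|^2$ and, the terms $\mathrm{tr}(\bs\Sigma^{-1}\bs\Sigma_W)$ cancelling, $\mathbb{E}\bigl(d^2(\bo X_{o,h})-d^2(\bo X_{no})\bigr)=(\bs\mu_h-\bs\mu_{\bo X})'\bs\Sigma^{-1}(\bs\mu_h-\bs\mu_{\bo X})-\bs\nu_0'\bs\Sigma^{-1}\bs\nu_0$, which depends only on the fixed configuration and hence not on $p$.

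The main obstacle is this last reduction: establishing that $d^2-d_R^2$ is asymptotically negligible at the $\sqrt p$ scale, which ultimately rests on the fact that replacing $\bs\Sigma_W$ by $\bs\Sigma$ and $\bs\mu_0$ by $\bs\mu_{\bo X}$ are both ``low-dimensional'' operations. Besides that, one must be careful to spell out the implicit hypothesis that the outlier configuration is frozen as $p\to\infty$ (without it neither the normalization by $2\sqrt p$ nor the $p$-independence of the expectations holds literally) and to verify the Lindeberg condition uniformly in $p$, which is routine given the uniform moment bounds.
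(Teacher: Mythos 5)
Your proposal is correct and follows essentially the same route as the paper's proof in \ref{proofprop1}: reduce by affine invariance to $\bs\Sigma_W=\bo I_p$ with the shifts confined to a fixed $q$-dimensional coordinate subspace, apply the Lindeberg--Feller theorem to the coordinatewise differences for $d_R^2$, and observe that the non-robust case differs only by a low-dimensional $O_p(1)$ correction (you package this via the finite-rank form of $\bs\Sigma^{-1}$ and Slutsky, the paper by lumping the first $q$ coordinates into a single summand $Y_1$ and rerunning Lindeberg--Feller, which is the same idea). Your explicit remark that the shift configuration must be frozen as $p\to\infty$ is exactly the hypothesis the paper encodes through its w.l.o.g.\ reduction with $q$ fixed.
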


Note that under  model (\ref{model}), the expectations can be made explicit but their expressions are complex and not detailed further.
The conclusion of Proposition 1 is that if outliers belong to a reduced dimension space (equal at most to $q$ in model (\ref{model})) and $p$ is large, then the probability that the Mahalanobis distance of an outlier exceeds the Mahalanobis distance of a non-outlier is small, because according to the asymptotic result, the variance of the differences increases when $p$ increases.
This makes the outlier identification more difficult. If the $q$-subspace is known, it is easy to avoid the problem of the $p-q$ noisy dimensions by projecting the data set
on this subspace and calculating a distance based on the $q$ dimensions that does not depend on $p$. This is exactly what ICS is all about, providing the data-analyst with
the ability to select a subspace displaying the outliers in an unsupervised way, and project the data on this subspace. Figure \ref{fig:MD1} illustrates in some sense Proposition 1
results and the competitive advantage of ICS compared to the Mahalanobis distance on a simple artificial data set. This set which will be discussed in the simulation
framework as ``Case 1'', contains 1000 observations with one cluster of 20 outliers  location shifted and plotted in black. The dimension of the data set increases from $p=6$
on the left panels, to 25 on the middle ones and 50 on the right panels. The top panels plot the non-robust Mahalanobis distances using the usual covariance estimator
while the middle panels plot robust Mahalanobis distances using the (reweighted) MCD estimator \citep{Rousseeuw1986}. The bottom panels plot the distances based on an automatic selection
of invariant components for ICS with a pair of scatter matrices estimators detailed later in the present paper.
When $p$ increases, it becomes more difficult to separate the outlying observations from the rest of the data using the Mahalanobis distances while the separation remains
much better using selected invariant components.
ICS is now detailed, and the choice of the scatter pair together with the selection of the invariant components is discussed.

\begin{figure}
   \caption{Squared distances (top: non-robust Mahalanobis, middle: robust Mahalanobis, bottom: Euclidian using invariant components with an automatic selection) for $p=6$ (resp. 25 and 50)
  on the left (resp. middle and right) panels for a sample of 1000 observations drawn from a mixture of two normal distributions with the 20 location shifted observations
  in black.}\label{fig:MD1}
\includegraphics[width=0.99\textwidth]{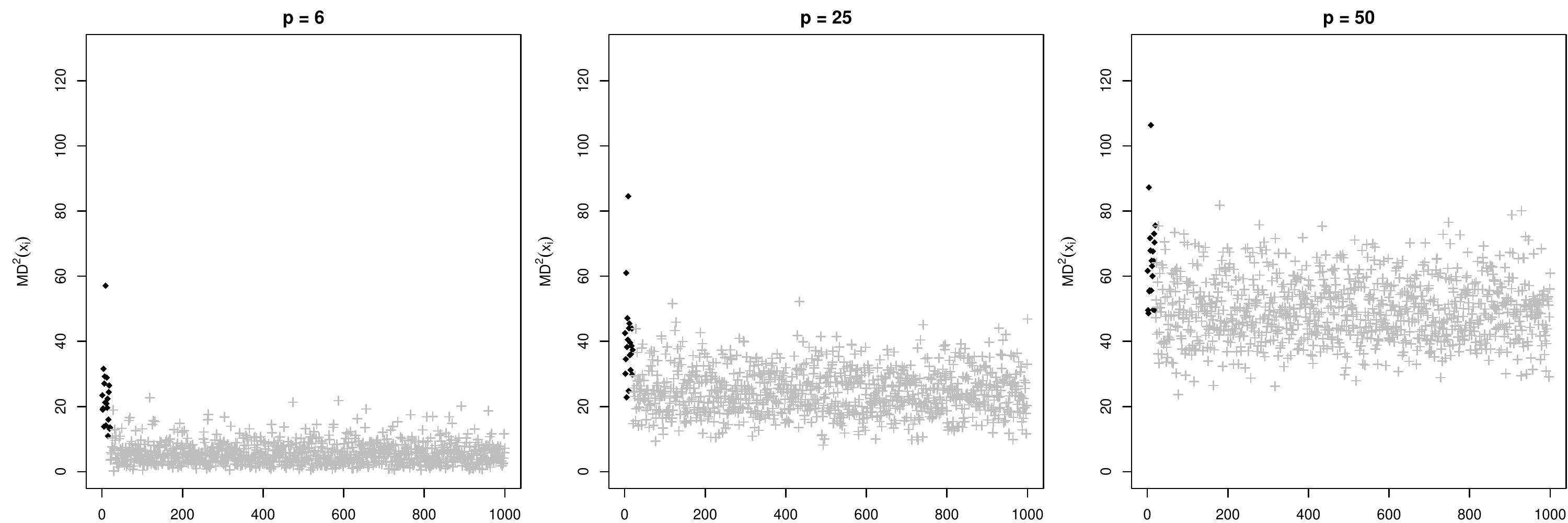}\\
\includegraphics[width=0.99\textwidth]{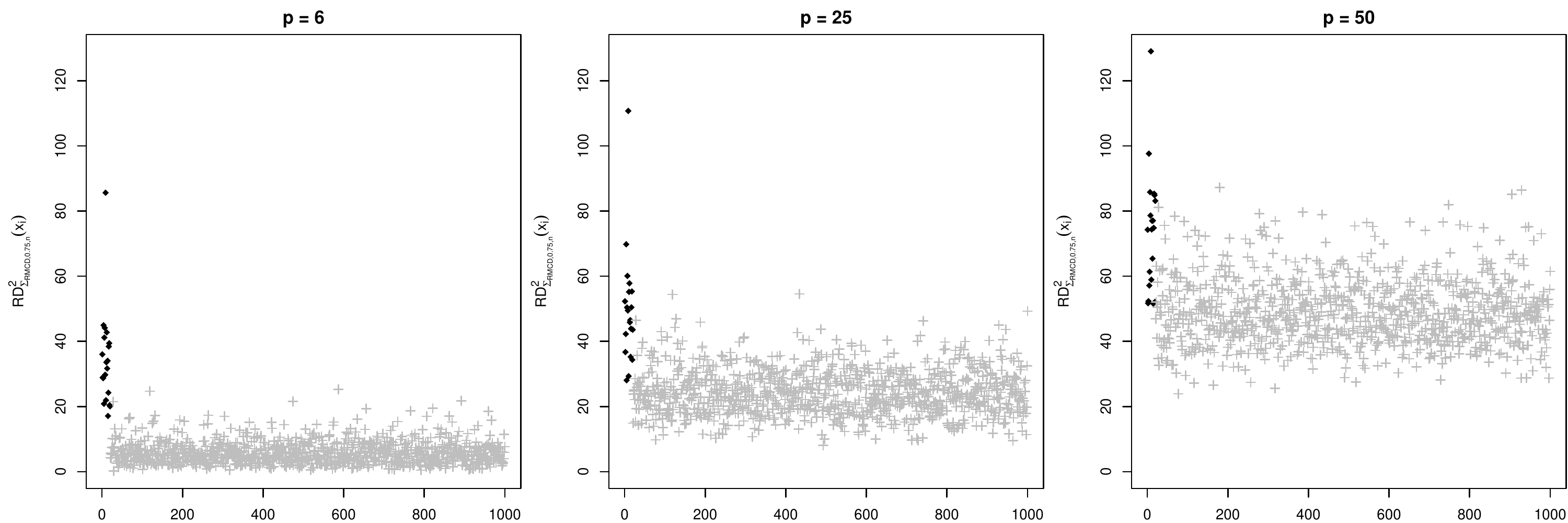}\\
\includegraphics[width=0.99\textwidth]{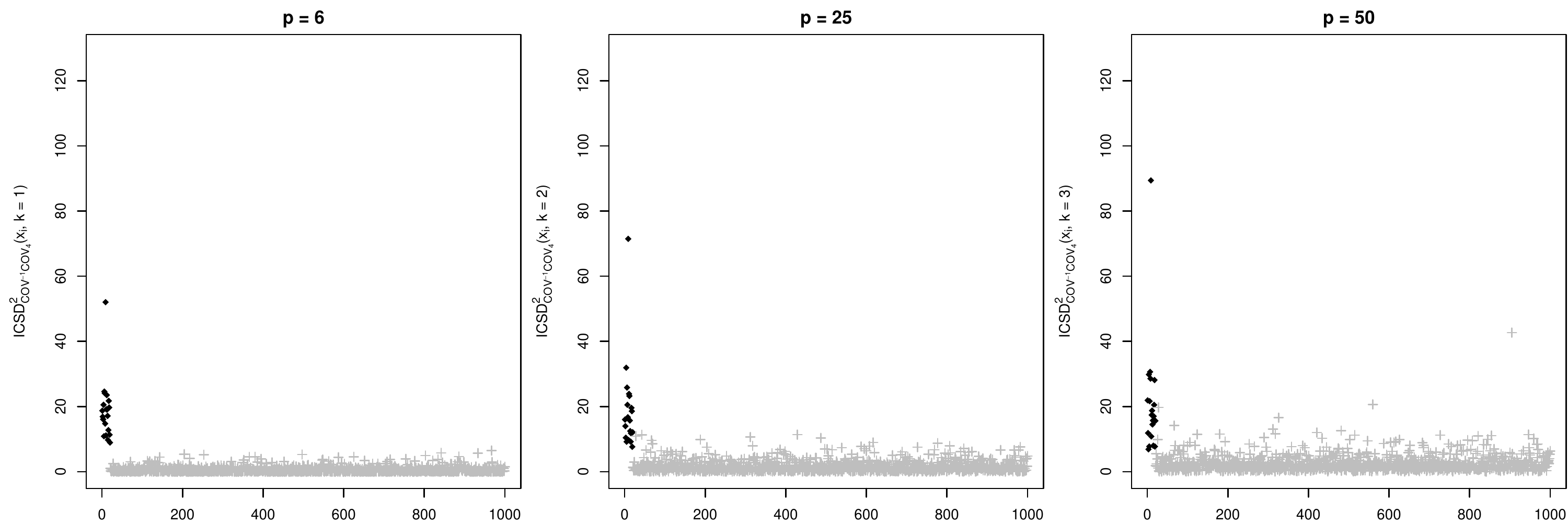}\\
\end{figure}

\section{Invariant Coordinate Selection}
\label{sec_ics}

\subsection{Scatter matrices}

For a $p$-variate dataset $\bo X_n=(\bo x_1,\ldots,\bo x_n)'$, any $p \times p$ matrix symmetric and definite positive $\bo V(\bo X_n)$ is a scatter matrix if it is affine equivariant in the sense that
\[
\bo V(\bo X_n \bo A + \bo 1_n \bo b') = \bo A' \bo V(\bo X_n) \bo A,
\]
where $\bo A$ is a full rank $p \times p$ matrix, $\bo b$ a $p$-vector and $\bo 1_n$ an $n$-vector full of ones.

The literature contains numerous scatter matrices suggestions (see \citet{NordhausenTyler2015} for a recent discussion and
many references).
\citet{Tyler2009} classify them into three classes depending on their robustness properties in terms of breakdown point and influence function.
Class I scatter matrices have a zero or almost zero breakdown value and an unbounded
influence function. Relevant scatter matrices from this class are the regular covariance matrix
\[
\cov(\bo X_n)= \frac{1}{n-1} \sum_{i=1}^{n} (\bo x_i-\bar{\bo x})(\bo x_i-\bar {\bo x})',
\]
where $\bar {\bo x}$ denotes the empirical mean, and the so called scatter matrix of fourth moments
\[
\cov_4(\bo X_n) = \frac{1}{(p+2)n} \sum_{i=1}^{n} r_i^2 (\bo x_i-\bar {\bo x})(\bo x_i-\bar {\bo x})',
\]
where $r_i^2 = (\bo x_i - \bar {\bo x})'\cov(\bo X_n)^{-1}(\bo x_i - \bar {\bo x})$ is the classical squared Mahalanobis distance.

Class II consists of scatter matrices with a bounded influence function but a breakdown point not larger than $(p+1)^{-1}$.
From this class, we will later use the following location and scatter matrix estimators defined through the implicit expressions:
\begin{eqnarray*}
\bo m_C(\bo X_n) & = & \sum_{i=1}^{n}(w(r_i^2)\bo x_i)/\sum_{i=1}^{n} w(r_i^2),\\
\mlt(\bo X_n)& = &\frac{1}{n} \sum_{i=1}^n w(r_i^2) (\bo x_i - \bo m_C(\bo X_n))(\bo x_i- \bo m_C(\bo X_n))',
\end{eqnarray*}
where  $r_i^2= (\bo x_i-\bo m_C(\bo X_n))'\mlt(\bo X)^{-1}(\bo x_i-\bo m_C(\bo X_n)$ and
$w(r_i^2) = (p+1)/(r_i^2+1)$.
These location and scatter matrix estimators are the maximum likelihood estimators of an elliptical Cauchy distribution and belong to the well-known class of M-estimators.

Class III scatter matrices are high-breakdown scatter matrices, and the reweighted Minimum Covariance Determinant (MCD) estimator is perhaps the most popular example from this class.
For a given $h \in [0.5; 1]$, the $\mcd_h$ searches for the $hn$ observations $\bo X_{hn}$ such that
$\cov(\bo X_{hn})$ has the smallest determinant and then is made more efficient by reweighting observations appropriately (see
\cite{Rousseeuw1986} and \cite{cator2012} for more details). The associated location estimator is a reweighted version of the average of the $hn$ observations.

While the Mahalanobis distance and PCA are based on one scatter matrix, ICS is based on the simultaneous use of two scatter matrices denoted below by
$\bo V_{1}(\bo X_n)$ and $\bo V_{2}(\bo X_n)$. We will choose among the four estimators recalled previously and consider that class III
estimators are more robust than class II, which are themselves more robust than class I. For the two class I estimators $\cov(\bo X_n)$ and $\cov_4(\bo X_n)$,
we will consider $\cov(\bo X_n)$ more robust than $\cov_4(\bo X_n)$ because the norm of its influence function is smaller.


\subsection{ICS principle}

Formally, the goal of ICS is to find the $p \times p$  matrix $\bo B(\bo X_n)$ and diagonal matrix $\bo D(\bo X_n)$ such that:
\[
\bo B(\bo X_n) \bo V_{1}(\bo X_n) \bo B(\bo X_n)' = \bo I_p  \;\; \mbox{ and}
\]
\[
\bo B(\bo X_n) \bo V_{2}(\bo X_n) \bo B(\bo X_n)' = \bo D(\bo X_n).
\]
$\bo D(\bo X_n)$ contains the eigenvalues of $\bo V_{1}(\bo X_n)^{-1} \bo V_{2}(\bo X_n)$ in decreasing order,
 while the rows of the matrix $\bo B(\bo X_n)=(\bo b_1,\ldots,\bo b_p)'$ contain the corresponding eigenvectors so that:
\[
\bo V_{1}(\bo X_n)^{-1} \bo V_{2}(\bo X_n) \bo B(\bo X_n)' = \bo B(\bo X_n)' \bo D(\bo X_n).
\]
Using any affine equivariant location estimator $\bo m(\bo X_n)$, the corresponding scores
\[
\bo Z_n =(\bo z_1,\ldots,\bo z_n)'= (\bo X_n- \bo 1_n \bo m(\bo X_n)')\bo B(\bo X_n)'
\]
are the so-called invariant coordinates or components. They are affine invariant in the sense that
\[
(\bo X_n^* - \bo 1_n \bo m(\bo X_n^*)')\bo B(\bo X_n^*)' =  (\bo X_n - \bo 1_n \bo m(\bo X_n)')\bo B(\bo X_n)' \bo J
\]
for $\bo X_n^* = \bo X_n \bo A+ \bo 1_n \bo b'$ with any full rank $p \times p$ matrix $\bo A$ and any $p$-vector $\bo b$. $\bo J$
is a $p \times p$ diagonal matrix with diagonal elements $\pm 1$, which means the invariant coordinates change at most their signs.
For convenience, the dependence on $\bo X_n$  is dropped from the different matrices when the context is obvious.

\noindent Because $\bo V_1^{-1}=\bo B' \bo B$, the proof of the following proposition is immediate.

\vspace{2mm}

\begin{prop}
Let us consider an affine equivariant location estimator $\bo m$ and two scatter matrices $\bo V_1$ and $\bo V_2$.
The Euclidian norm of an observation using its invariant coordinates corresponds to the Mahalanobis distance
of this observation from $\bo m$ in the sense of $\bo V_1$. Formally, it means that for observation $i=1,\ldots,n$,
\[
\bo z_i' \bo z_i= (\bo x_i- \bo m)'\bo V_1^{-1}(\bo x_i- \bo m)
\]
\end{prop}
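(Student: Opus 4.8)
The plan is to read off a clean formula for $\bo V_1^{-1}$ from the first ICS defining equation and then simply expand $\bo z_i'\bo z_i$; as the remark preceding the statement already signals, everything reduces to the identity $\bo V_1^{-1}=\bo B'\bo B$.

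First I would note that $\bo B$ is necessarily invertible, since $\det(\bo B)^2\det(\bo V_1)=\det(\bo B\bo V_1\bo B')=\det(\bo I_p)=1$ while $\det(\bo V_1)>0$. Inverting both sides of $\bo B\bo V_1\bo B'=\bo I_p$ then gives $\bo V_1=\bo B^{-1}(\bo B')^{-1}=(\bo B'\bo B)^{-1}$, hence $\bo V_1^{-1}=\bo B'\bo B$.

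Next I would unwind the definition of the scores. From $\bo Z_n=(\bo X_n-\bo 1_n\bo m')\bo B'$ together with $\bo Z_n=(\bo z_1,\ldots,\bo z_n)'$, taking the $i$-th row of each side — and using that the $i$-th row of $\bo X_n-\bo 1_n\bo m'$ is $(\bo x_i-\bo m)'$ — yields $\bo z_i'=(\bo x_i-\bo m)'\bo B'$, i.e. $\bo z_i=\bo B(\bo x_i-\bo m)$. Substituting this and the formula for $\bo V_1^{-1}$,
\[
\bo z_i'\bo z_i=(\bo x_i-\bo m)'\bo B'\bo B(\bo x_i-\bo m)=(\bo x_i-\bo m)'\bo V_1^{-1}(\bo x_i-\bo m),
\]
which is exactly the asserted equality.

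There is no real difficulty in this argument: the only points deserving a line of justification are the invertibility of $\bo B$ used above and the transpose bookkeeping identifying the rows of $\bo Z_n$ with the vectors $\bo z_i'$. It is worth stressing that neither the affine equivariance of $\bo m$ nor the second scatter matrix $\bo V_2$ plays any role here — those hypotheses are needed only for the affine invariance of the coordinates discussed around the statement, whereas the norm identity holds for any factorization $\bo V_1^{-1}=\bo B'\bo B$ and any centering $\bo m$.
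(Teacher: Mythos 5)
Your proposal is correct and follows exactly the route the paper indicates: it notes that the proof is immediate once one observes $\bo V_1^{-1}=\bo B'\bo B$ from the defining relation $\bo B\bo V_1\bo B'=\bo I_p$, and then expands $\bo z_i'\bo z_i$ with $\bo z_i=\bo B(\bo x_i-\bo m)$. Your added justifications (invertibility of $\bo B$ via determinants and the row/transpose bookkeeping) merely fill in details the paper leaves implicit.
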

\cite{Tyler2009}, p. 554, underlines the exchangeability between the roles of $\bo V_1$ and $\bo V_2$. However, as can be observed from Proposition 2,
exchanging the two scatter matrices has an impact on the scale of the invariant coordinates and not only on the fact that the eigenvalues are the inverse of the
others and the eigenvectors are in reverse order. In the following, we propose to use the location estimator associated with the scatter matrix $\bo V_1$  and take $\bo V_1$   ``more'' robust than $\bo V_2$, as \citet{Alashwali2016}, so that the Euclidian distance using all invariant components leads to a more robust Mahalanobis distance.

\section{ICS implementation for outlier detection}
\label{sec_icsout}

Identifying outliers with ICS is a three step procedure. The first step consists in choosing a pair of scatter matrices and calculating the
invariant coordinates. The second step is the selection of the relevant invariant components and the calculation of the Euclidian norm of the $n$ observations
using only the selected components.
The last step is the outlier identification with the choice of a cut-off value $c$ such that observations with a norm larger than $c$ are flagged as outliers.

\subsection{The choice of the scatter pair}

When the objective is outlier detection, Caussinus and Ruiz-Gazen (1990, 2003) and \citet{Tyler2009} recommend using class I scatter estimators such as the classical
one or some weighted scatter matrix. The main reason for this choice is that these estimators are simple and can be computed rapidly.
Moreover, the nice properties of ICS given by Theorems~3 and 4 in \citet{Tyler2009} are true even for non-robust estimators such as the  $\cov$ and  $\cov_4$ scatter matrices. For this particular pair, the formulation of Theorem 3, which applies to a mixture of two Gaussian distributions with different locations and proportional scatter matrices, can be made much more precise.
As explained in \citet{Tyler2009}, for a proportion of outliers smaller than $(3-\sqrt{3})/6$ (around 21\%), the first invariant component displays the outliers.
Similar results can be derived for other particular mixtures as detailed in \ref{sec:CompChap2}.
More precisely, for a symmetrically contaminated Gaussian distribution with equal covariance matrices (which is similar to the so-called barrow wheel distribution), the first component will display the structure as soon as the contamination level is smaller than 33\%. And this is also true for a Gaussian mixture with inflated variance in $q$ directions: as soon as the contamination is smaller than 50\%, the invariant components associated with the $q$ largest eigenvalues will span the subspace of interest.
For other scatter pairs, this calculus is not analytically tractable anymore and so a comparison through simulations is worthwhile.
In the present paper, we propose comparing four pairs of scatter matrix estimators taken from the three different classes based on simulations.
The first pair is based on two class I estimators $\bo V_1 = \cov$ and $\bo V_2=\cov_4$, while the others are based on class II and I with $\bo V_1 = \mlt$ and $\bo V_2=\cov$, class III and I with $\bo V_1 = \mcd$ and $\bo V_2=\cov$ and class III and II scatter estimators with $\bo V_1 = \mcd$ and $\bo V_2=\mlt$.

\subsection{The invariant components selection}
\label{subsec:ICSselect}

In the present subsection, we focus on the case of a small proportion of outliers that could be as high as 20\% if we take into account the theoretical properties of ICS for the $\cov - \cov_4$ pair as detailed in \ref{sec:CompChap2}.
We assume that the outliers belong to a subspace of dimension $q\leq p$, and we aim at providing some procedures to automatically select a number of invariant components close to $q$.
Beginning with the first component, we test whether each invariant component is significantly relevant via two different sequential approaches. For both approaches, as soon as one invariant component, - let us say number ($k+1$), -  is not significantly relevant, we stop the procedure and select the $k$ first components. In this particular context of sequential multiple testing, some adjustments on the initial significance level $\alpha$ are necessary. Following \citet{Dray2008}, we apply the Bonferroni correction on the significance level and consider a level $\alpha_j=\alpha/j$ for each component $j=1,\ldots,p$.

The first approach consists in a Parallel Analysis (PA) based on Monte Carlo simulations. For some given dimensions $n$ and $p$, many samples are generated following a standard multivariate Gaussian distribution, and for each sample and a given scatter pair, the eigenvalues of the simultaneous diagonalization of the two scatter matrices are computed. Cut-offs for the eigenvalues are then derived using the empirical quantiles of the eigenvalues from the simulated Gaussian data. This method is common for selecting components in PCA as described in \citet{PeresNetoJacksonSomers2005}. It was already used in \citet{caussinus2003projections} for ICS but only for a particular pair of scatters.
The second approach makes use of the fact that relevant components for displaying outliers do not follow a Gaussian distribution. It is thus based on univariate normality tests for each component beginning with the first one as previously described. The five tests we compare are the D'Agostino test of skewness (DA), the Anscombe-Glynn (AG) test of kurtosis, the Bonett-Seier (BS) test of Geary's kurtosis, the Jarque-Bera (JB) test based on both skewness and kurtosis and the Shapiro-Wilk (SW) normality test (see \citet{Yazici2007} and \citet{bonett2002test} for a complete description of these five tests).

Note that automated selection procedures are necessary in a simulation framework but may not be the best alternative when analyzing one data set. This point will be detailed further in the data analysis section of the present paper, where we also explore the possibility of using a scree plot as in PCA.


\subsection{Outlier identification}
Once having selected $k$ invariant components, the last  procedure step is the identification of outlying observations.
For each observation $i=1,\ldots,n$, we calculate its squared ``ICS distance'' which corresponds to its squared Euclidian norm in the invariant coordinate system taking into account
the first $k$ coordinates:
\[
\left(\mbox{ICS distance}\right)^2_i=\sum_{j=1}^k \left(z_i^j \right)^2
\]
where $z_i^j$ denotes the $j$th coordinate of the score $\bo z_i$.
As the distribution of the ICS distances is unknown, we derive cut-offs
based on Monte Carlo simulations from the standard Gaussian distribution.
For a given data dimension, a scatter pair and a number $k$ of selected components, we generate many samples and compute the ICS distances. A cut-off is derived for a fixed level $\gamma$ as the $1-\gamma$ percentile of these distances.
An observation with a distance higher than this cut-off is flagged as an outlier.

The implementation of ICS for outlier detection in the next two sections is performed  in R 3.1.2 \citep{R} using the packages ICS \citep{ICS}, ICSOutlier \citep{ICSOutlier}, mvtnorm \citep{mvtnorm}, moments \citep{moments}, robustX \citep{robustX} and robustbase \citep{robustbase}.

\section{Simulations}\label{sec_simu}
\subsection{Simulation framework}

ICS performance for outlier detection is evaluated through an extensive simulation study in the particular context of a proportion of outliers fixed at 2\%.
As already indicated, this small proportion is consistent with some current practice in industrial applications where the data already meet the standard quality controls and
only a few observations, clearly identified as multivariate outliers, may be disregarded.
The different models we consider are well-known models in the robust statistics literature \citep{hampel1986robust}. Using the $\cov-\cov_4$ scatter pair and for the two components mixture models or the scale-shift model, it is possible to derive some theoretical conditions on the contamination level which insure that the first invariant components point in the directions of the outliers (see \ref{sec:CompChap2} for details).

In this framework, we discuss the impact of the scatter pair together with the components selection strategy and the choice of the cut-off for identifying outliers. Some of the conclusions and recommendations drawn from this study are used as guidelines for the data analysis conducted in Section \ref{sec_real} in different industrial settings.
Concerning the scatter matrices, the four pairs (i) $\cov-\cov_4$, (ii) $\mlt-\cov$, (iii) $\mcd-\cov$ and (iv) $\mcd - \mlt$ are evaluated.
In pairs (ii)-(iv) the scatter matrices come from different classes, while in pair (i) both come from class I.
For the $\mcd$, given that the proportion of outliers is small, the value $h=0.75$ which is often advocated \citep{croux1999} is used throughout the simulations, leading to a 25\% breakdown point.

For each of the six setups, we generate $1000$ samples with sample size $n=1000$ and dimension $p$ equal to 6, 25 and 50.
For all cases, the uncontaminated data follow a Gaussian distribution with mean 0 and covariance matrix $\Sigma_i$, $i=0,\ldots,5$, depending on the setup.
Except for Case 0 which contains no outlier, we generate exactly 20 outliers in each sample so that the proportion of outliers is 2\% in all samples.
We use the notation $\bs e_i$ for the $p$-vector with a one in the $i$th coordinate and zero elsewhere. For each setup we give the dimension $q$ of
the subspace spanned by the outliers. For dimension $p=6$, the figure in the supplementary material gives the scatterplot matrix of the variables for one simulation of each of the six cases in order to visualize easily the structure of the data sets. Some affine transformation could have been performed in order to mask the structure like in \citet{StahelMachler2009} for the barrow wheel distribution. Such transformation has no impact on the MD and ICS results but may change completely the PCA results.

\begin{description}
  \item[Case 0 ($q=0$):] $\bs \Sigma_0=\bo I_p$ with no outlier.
  \item[Case 1 ($q=1$):] $\bs \Sigma_1 = \diag(1,4,\ldots,4)$ with outliers clustered in one direction with distribution ${\cal N}(6 \bs e_1, \bs \Sigma_1)$.
  \item[Case 2 ($q=1$):] $\bs \Sigma_2 = \diag(0.1,1,\ldots,1)$ with outliers following a distribution $H$ such that $\bo h=(h_1,\bo h_2')' \sim H$ means that $h_1 \sim \chi_5$ and $\bo h_2 \sim {\cal N}(\bo 0, 0.2 \bo I_{p-1})$. The data follows the so-called barrow wheel distribution as introduced in \citet{hampel1986robust} and using a slightly modified setting compared to \citet{StahelMachler2009}. No rescaling or rotation has been performed. In any case such transformations have no impact on the ICS results. Hence, outliers are generated along the same direction on both sides of the uncontaminated data cloud.
  \item[Case 3 ($q=2$):] $\bs \Sigma_3 = \diag(1,1,4,\ldots,4)$ with outliers clustered in two directions with 12 (resp. 8) observations following a ${\cal N}(6 \bs e_1, \bs \Sigma_3)$
  (resp. ${\cal N}(6.2 \bs e_2, \bs \Sigma_3)$) distribution.
  \item[Case 4 ($q=6$):] $\bs \Sigma_4=\bo I_p$ with outliers clustered in six directions with Gaussian distribution with mean  $\bs \mu_{i} = (6 + 0.1(i-1)) \bs e_i$, $i=1,\ldots,6$ and covariance $\bo I_p$, with 4 (resp. 3) outliers in the first two (resp. last four) clusters.
  \item[Case 5 ($q\leq 6$):] $\bs \Sigma_5=\bo I_p$ with outliers generated in up to six directions via scale shifts with a covariance matrix $\bs {\tilde \Sigma}_5 = \diag(5,\ldots,5)$ if $p \leq 6$ and $\diag(5,5,5,5,5,5,1,\ldots,1)$ if $p > 6$.
  The 20 outliers are generated by drawing observations from a $N(\bo 0, \bs {\tilde \Sigma}_5)$ distribution and keeping the ones with at least one variable (among the first six) larger than the maximum value or smaller than the minimum value of the non-outlying observations.
\end{description}

Details concerning the implementation of the simulations can be found in the supplementary material.
To compare the performance of the methods, we provide the percentage of outliers correctly identified (denoted by TP for ``True Positive'')  and the percentage of non-outlying observations erroneously identified as outliers (FN for ``False Negative'').

\subsection{Selecting the invariant components}

Before examining the performance of ICS in terms of TP and FP, we observe the selected dimensions using the D'Agostino (DA) and the Parallel Analysis (PA) methods for a level $\alpha=5\%$.
Table~\ref{tab:ICSnoDir} below gives the average of these dimensions over the 1000 simulations for the different cases.
Note that the results for the other normality tests proposed in Subsection \ref{subsec:ICSselect} have not been reported because they do not improve the performance compared with the DA and PA methods.

\begin{table}[!h]
\centering
\begingroup\footnotesize
\begin{tabular}{lc*{6}{*{2}{c}}}
  \hline
Scatters & $p$ &  \multicolumn{2}{c}{Case 0} & \multicolumn{2}{c}{Case 1} &\multicolumn{2}{c}{Case 2} &\multicolumn{2}{c}{Case 3} & \multicolumn{2}{c}{Case 4} &\multicolumn{2}{c}{Case 5} \\
 &  & \multicolumn{2}{c}{($q=0$)}&\multicolumn{2}{c}{ ($q=1$)}& \multicolumn{2}{c}{($q=1$)}&\multicolumn{2}{c}{ ($q=2$)}& \multicolumn{2}{c}{($q=6$)}&\multicolumn{2}{c}{ ($q\leq 6$)}\\
 &  &  DA & PA &DA & PA&DA & PA&DA & PA&DA & PA&DA & PA\\
\hline
  \hline
COV - COV$_4$ &   6 & 0.14 & 0.08 & 1.06 & 1.58 & 1.00 & 1.00 & 1.96 & 2.90 & 2.67 & 6.00 & 1.34 & 5.96 \\
  COV - COV$_4$ &  25 & 0.42 & 0.09 & 1.25 & 1.98 & 1.27 & 1.09 & 2.09 & 4.33 & 2.95 & 10.48 & 1.62 & 8.41 \\
  COV - COV$_4$ &  50 & 0.80 & 0.06 & 1.59 & 1.82 & 1.53 & 2.02 & 2.37 & 4.35 & 2.93 & 11.48 & 1.99 & 7.41 \\
   \hline MLC - COV &   6 & 0.12 & 0.08 & 1.05 & 1.45 & 0.99 & 1.08 & 1.98 & 2.77 & 2.13 & 5.97 & 1.09 & 5.36 \\
  MLC - COV &  25 & 0.23 & 0.08 & 1.15 & 1.75 & 1.10 & 1.04 & 2.03 & 3.59 & 2.08 & 9.25 & 1.18 & 6.27 \\
  MLC - COV &  50 & 0.46 & 0.06 & 1.34 & 1.76 & 0.48 & 20.31 & 2.16 & 3.87 & 2.10 & 8.86 & 1.32 & 5.23 \\
   \hline MCD - COV &   6 & 0.15 & 0.05 & 1.06 & 1.05 & 1.00 & 1.00 & 2.01 & 2.21 & 2.06 & 6.00 & 1.08 & 5.62 \\
  MCD - COV &  25 & 0.38 & 0.07 & 1.28 & 1.29 & 1.21 & 1.02 & 2.15 & 2.84 & 2.08 & 9.24 & 1.13 & 6.56 \\
  MCD - COV &  50 & 0.65 & 0.05 & 1.46 & 1.51 & 1.43 & 1.06 & 2.33 & 3.33 & 1.79 & 6.94 & 1.13 & 3.45 \\
  \hline  MCD - MLC &   6 & 0.08 & 0.07 & 1.04 & 0.99 & 1.00 & 1.00 & 2.05 & 0.52 & 1.75 & 0.03 & 0.66 & 0.05 \\
  MCD - MLC &  25 & 0.25 & 0.05 & 1.17 & 1.14 & 1.11 & 1.04 & 2.09 & 2.40 & 1.28 & 1.96 & 0.68 & 1.54 \\
  MCD - MLC &  50 & 0.56 & 0.05 & 1.42 & 1.49 & 1.40 & 1.00 & 2.27 & 2.83 & 1.28 & 1.96 & 0.78 & 1.13 \\
   \hline
\end{tabular}
\caption{Averaged numbers of selected invariant components for the DA and PA methods}
 \label{tab:ICSnoDir}
\endgroup
\end{table}

Under setups 0, 1, 2 and 3, the results from Table~\ref{tab:ICSnoDir} are overall quite good and comparable for the different scatter pairs.
Only certain specific results have to be pointed out for the pairs $\mlt-\cov$ (Case 2 with $p=50$ for DA and PA) and $\mcd-\mlt$ (Case 3, $p=6$ for PA), and these points
require further investigation.
Moreover, for the four setups,  the differences between procedures DA and PA are small, with some overestimation of the dimension for PA in Case 3 when
$p=25$ or 50.

The results are not as good for Cases 4 and 5 which correspond to larger $q$ values than  the other setups, in particular for the DA procedure that leads
to an important underestimation of the dimension for all scatter pairs. The PA procedure gives better results in this context except for the $\mcd-\mlt$ pair, which leads
to an important underestimation in all cases.
The results for $\cov-\cov4$ and $\mcd-\cov$ are quite similar despite a larger overestimation of the dimension for $\cov-\cov4$ in Cases 4 and 5, for $p=25$ and $p=50$, when using the PA procedure.

These first results are in favor of the pairs $\cov-\cov4$ and $\mcd-\cov$ but need to be confirmed by studying the performance of the methods in terms of TP and FP.

\subsection{Detecting outliers with ICS}

Table~\ref{tab:ICSglobal} gives the TP (except for Case 0) and FP averaged over the 1000 simulations and averaged also over Cases 1 to 5 to save space.
The $\gamma$ level for the identification cut-off is fixed at 2\%.

\begin{table}[!h]
\centering
\begingroup
\footnotesize
 \begin{tabular}{l*{3}{*{3}{c}}}

  \hline
Averaged Measures in \%& \multicolumn{3}{c}{TP}& \multicolumn{3}{c}{FP} & \multicolumn{3}{c}{FP Case 0} \\
$p$ & 6 & 25 & 50 & 6 & 25 & 50&  6 &  25 & 50 \\
	  \hline
  		  \hline
				
  True subspace &  95.10 & 96.68 & 93.92 & 0.10 & 0.07 & 0.12 &  &  &  \\
  \hline
	ICS true $q$ COV - COV$_4$ & 96.92 & 92.52 & 80.13 & 0.57 & 0.33 & 0.49 &  &  &  \\
  ICS true $q$ MLC - COV & 97.53 & 92.24 & 64.97 & 1.27 & 0.63 & 1.01 &  &  &  \\
  ICS true $q$ MCD - COV & 97.53 & 93.59 & 81.99 & 1.48 & 0.92 & 0.83 &  &  &  \\
  ICS true $q$ MCD - MLC & 97.29 & 92.75 & 80.09 & 1.82 & 1.49 & 1.07 &  &  &  \\
   \hline
	ICS DA COV - COV$_4$ & 77.01 & 78.14 & 70.26 & 0.43 & 0.38 & 0.57 & 0.30 & 0.70 & 1.17 \\
	ICS DA MLC - COV & 76.34 & 75.84 & 54.18 & 1.03 & 0.56 & 0.65 & 0.25 & 0.42 & 0.76 \\
	ICS DA MCD - COV & 76.77 & 77.61 & 69.31 & 1.22 & 0.94 & 0.86 & 0.30 & 0.68 & 0.95 \\
	ICS DA MCD - MLC & 71.51 & 71.41 & 65.18 & 1.53 & 1.23 & 0.95 & 0.18 & 0.48 & 0.87 \\
 		\hline
	ICS PA COV - COV$_4$ & 96.73 & 91.39 & 76.29 & 0.70 & 0.64 & 0.79 & 0.10 & 0.11 & 0.09 \\
  ICS PA MLC - COV & 96.89 & 91.68 & 62.77 & 1.39 & 0.85 & 1.10 & 0.15 & 0.12 & 0.09 \\
  ICS PA MCD - COV & 97.36 & 93.36 & 76.66 & 1.50 & 1.03 & 0.89 & 0.11 & 0.14 & 0.09 \\
  ICS PA MCD - MLC & 44.95 & 76.92 & 65.44 & 0.95 & 1.32 & 0.90 & 0.08 & 0.11 & 0.09 \\
 	 		  \hline
\end{tabular}
\caption{TP and FP results for ICS (averaged results for Cases 1 to 5).}\label{tab:global_res}
\label{tab:ICSglobal}
\endgroup
\end{table}

The first row of Table~\ref{tab:ICSglobal} gives some kind of oracle performance measure obtained by calculating TP and FP values using the Euclidian norm of the projected data on the true subspace containing the outliers (known for each Case). As anticipated, the results are very good regardless of the dimension.
The next results are obtained when the true number of invariant components is selected but the invariant components are estimated using different scatter pairs.
They give another oracle performance measure.
Compared with the first row of Table~\ref{tab:ICSglobal}, these results are globally good in terms of TP but perform less well in terms of FP.
They give an idea of the impact of the scatter matrices estimation when the number of invariant components is the true one.
In this context, the COV - COV4 scatter pair clearly outperforms the others with similar TP values but smaller FP values.

Then, the results are given for the two automated selection DA and PA. Compared with the previous results, they give some insight into the impact of the dimension selection procedures. When looking at Cases 1 to 5, there is no method for dimension selection that outperforms the other.
PA is the best in terms of TP, but D'Agostino is the best in terms of FP.
However, for Case 0, any dimension $p$ and any scatter pair, the PA selection leads to less than 0.15\% FP on average, while
the values are larger for DA. The choice $\cov-\cov_4$ is the best in most situations from the FP point of view.

\begin{figure}
  \caption{Averaged TP and FP results for ICS detailed for Cases 1 to 5.}\label{fig:ICS_TP_FP_CASE15}
  \includegraphics[width=0.99\textwidth]{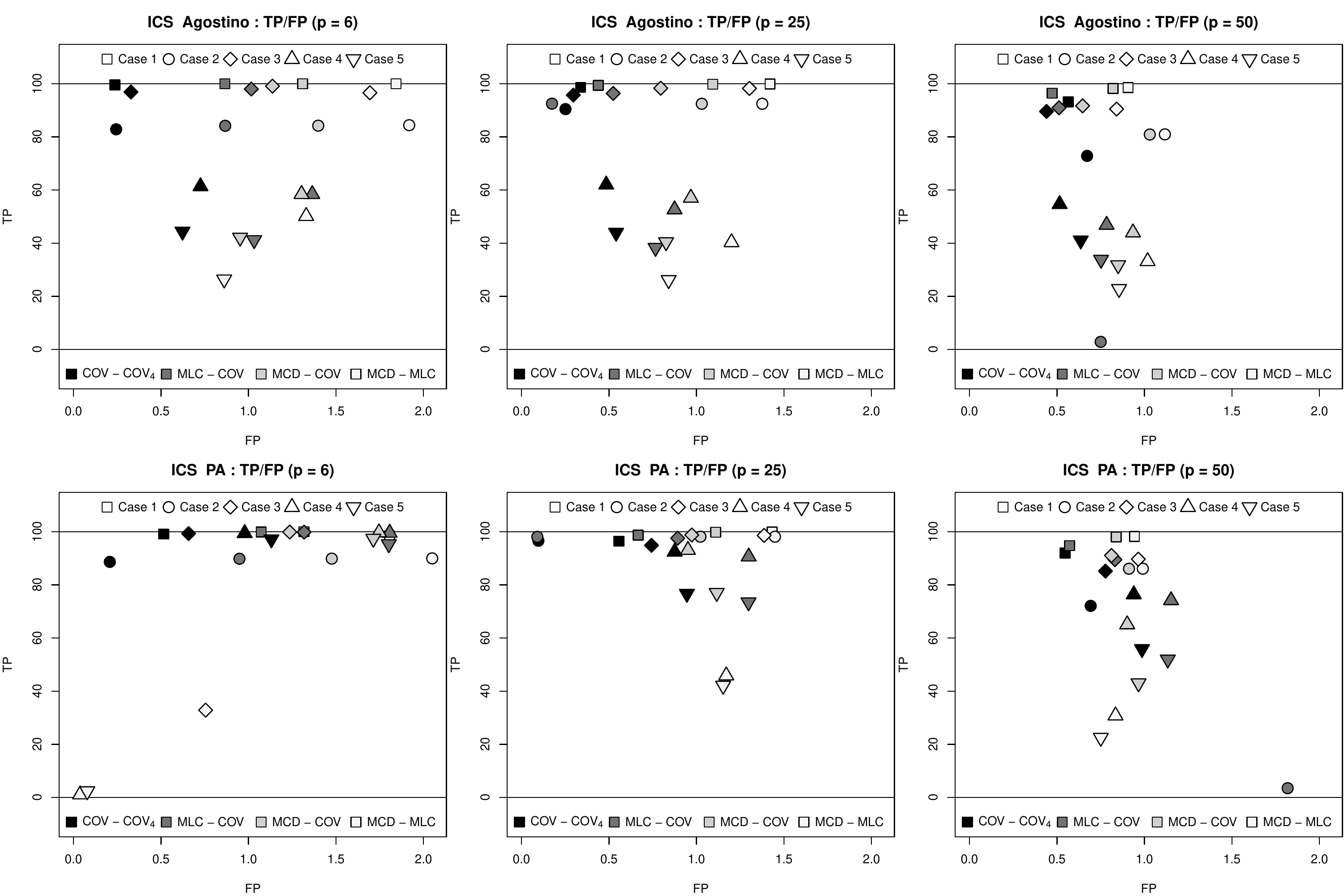}\\
\end{figure}

Figure~\ref{fig:ICS_TP_FP_CASE15} gives more details concerning the TP and the FP values for the Cases 1 to 5.
It contains scatter plots of the TP against the FP for D'Agostino (top) and PA (bottom) and for the different values of $p$.
Note that Case 2 for $p=50$ and $\mlt-\cov$  is very specific with often no component selected and will not be considered further in our comments.
For DA, the results are clearly ordered in terms of TP according to the different Cases, from the largest TP values for Case 1 to the smallest ones
for Case 5. There are only tiny differences between the scatter pairs.  With respect to FP values, the results are now ordered according to the different scatter pairs from
the smallest values for $\cov-\cov_4$ to the largest values for $\mcd-\mlt$. These differences are more limited for Cases 4 and 5 than for Cases 1 to 3 and decrease for all cases when $p$ increases.

For PA, the results differ. If we except $\mcd-\mlt$, all scatter pairs lead to very similar and good TP values when $p=6$ while $\cov-\cov_4$ is clearly the best when comparing FP values. For the particular pair $\mcd - \mlt$ and Cases 4 and 5, as observed from Table~\ref{tab:ICSnoDir}, no dimension is selected,
and so no outlier can be detected. When $p$ increases, in general the results become worse for TP, in particular for Cases 4 and 5, while they become close together for FP.

From this simulation results, we recommend using the pair $\cov-\cov4$. For this scatter pair, the results for DA and PA, - compared to the ones obtained when the
true dimension $q$ is known, - do not make it possible to conclude in favor of one of the two selection methods. While the TP values are better and closer to the oracle  for PA, the FP values are better and closer to the oracle  for DA.

\subsection{Comparing ICS with the Mahalanobis distance and PCA}
Table~\ref{tab:ICS_PCA_MD}  recalls the TP and the FP values for ICS focusing on $\cov-\cov4$ but also gives the values
when using non-robust (MD) and robust (RD) Mahalanobis distances  and PCA (unstandardized and standardized). RD is obtained using
a 25\% breakdown point reweighted MCD estimator.
For the Mahalanobis distances, we only report the results when the cut-off values are the usual ones, based on a chi-squared distribution quantile (of order 2\%) or are adjusted to take into account some asymptotic corrections for RD and the method is denoted GM (see \cite{packagecerioli}  and \cite{green2017}for the implementation).
Other criteria obtained through simulations have been implemented but do not bring any improvement and are not reported.
Concerning PCA and robust PCA, the outlier detection procedure is quite complex since the method is not aimed at detecting outliers.
Atypical observations may thus be displayed on any of the $p$ principal components \citep{jolliffe2002}.
Basically, the procedure consists in selecting some components and calculating, on the one hand, a distance in the space spanned by the selected components (after some standardization), and, on the other hand, a distance in the space orthogonal to the previous space (see \cite{hubert2005robpca} for details). In our comparison, following \citet{hubert2005robpca}, observations associated with at least one large distance are flagged as outliers using some cut-off values based on quantiles of order 99\% for each distance. We tried different methods for principal components selection but report only the results obtained when the dimension is chosen as the best possible among all possible dimensions (from 1 to $p$). More precisely, it means that the results give the smallest FP value among all the results that were found to maximize the TP value. Automated methods were also tested but the results were never better than the ones reported. Some robust PCA methods where the usual covariance or correlation matrix is replaced by some robust estimators were also implemented but did not lead to better results and are not reported neither.
Results are averaged for Cases 1 to 5 in order to save space.

\begin{table}[!h]
\centering
\begingroup
 \begin{tabular}{l*{3}{*{3}{c}}}
  \hline
Averaged Measures & \multicolumn{3}{c}{TP}& \multicolumn{3}{c}{FP} & \multicolumn{3}{c}{FP Case 0} \\
$p$ & 6 & 25 & 50 & 6 & 25 & 50&  6 &  25 & 50 \\
	  \hline
		 \hline
  ICS DA COV - COV$_4$ & 77.01 & 78.14 & 70.26 & 0.43 & 0.38 & 0.57 & 0.30 & 0.70 & 1.17 \\
  ICS PA COV - COV$_4$ & 96.73 & 91.39 & 76.29 & 0.70 & 0.64 & 0.79 & 0.10 & 0.11 & 0.09 \\
   \hline
		 \hline
	MD & 94.14 & 72.80 & 52.03 & 1.24 & 1.91 & 2.40 & 2.09 & 2.35 & 2.69 \\
   \hline
	RD GM & 94.03 & 78.80 & 53.57 & 0.20 & 0.26 & 0.26 & 0.21 & 0.29 & 0.23 \\
  RD & 97.37 & 91.34 & 75.26 & 1.78 & 1.88 & 1.94 & 2.09 & 2.12 & 2.10 \\
	 \hline
		 \hline
  PCA & 98.55 & 91.58 & 84.43 & 1.14 & 1.22 & 1.17 & 2.01 & 1.91 & 1.84 \\
  PCA std & 80.98 & 80.88 & 47.85 & 0.58 & 0.84 & 1.52 & 1.99 & 1.80 & 1.54 \\
  \hline
\end{tabular}
\caption{Comparison of ICS with MD, RD and PCA (averaged results in \% for Cases 1 to 5).}
 \label{tab:ICS_PCA_MD}
\endgroup
\end{table}

The performance of MD, RD and PCA compared to the other methods is particularly low when focusing on the FP measure.
For standardized PCA,
 results are better when the dimension is equal to 6 but the method cannot compete when the dimension increases.
ICS with DA and PA together with RD when using the GM correction lead to better performance. When $p=6$, RD GM gives the best results with very low FP on average for Cases 1 to 5. When $p=50$, the method still leads to low false detection but at the cost of a low true positive detection compared to ICS. For Case 0, RD GM exhibits good results but ICS PA outperforms it. In conclusion,  we advocate the use of ICS with the scatter pair $\cov-\cov4$, which is very easy to compute and exhibits good performance. In this framework where the majority of the data follows a Gaussian distribution, we recommend the PA components selection method, but the DA method is an interesting alternative with a very low computational cost. Note that in case the majority of the data does not follow a Gaussian distribution, the different cutoffs are not valid anymore and should be adapted.

\section{Data Analysis}
\label{sec_real}
We analyze three real data sets using ICS and compare ICS with several competitors that are Mahalanobis distance or PCA variants, including ROBPCA as introduced by \citet{hubert2005robpca} and implemented in the package rrcov \citep{rrcov}.
All data are from industrial processes and contain potentially a small proportion of outliers. The last two data sets in particular come from industrial processes where there is a high level of quality control and only a small proportion of observations can be diagnosed as outliers. It implies that the False Positive rate is crucial and should be as small as possible.

For each of the three data sets, we give details concerning the observations considered as true outliers in the following subsections. Table \ref{tab:realex} provides the number of True Positive (NTP) and the number of False Positive (NFP) for the three data sets. Be careful that for this particular table, the values are not given as proportions. For ICS, we only report results for the scatter pair $\cov - \cov_4$ because, in most cases, this pair leads  to the best results, which is consistent with our simulation conclusions and confirms our recommendations. For ICS and PCA methods, the results depend on the number of selected components, and we show results for three different types of selection. The ``best selection'' results are obtained by trying all possible dimensions between 1 and $p$ and taking, for each method, the dimension $k$, which leads to the smallest NFP among those that maximize the NTP. This procedure leads to some kind of oracle measure of the maximum performance of the methods.
The second type of results are obtained through automated components selection methods as detailed in the previous section for ICS and using the rule proposed in the package rrcov for ROBPCA.
Moreover, for ICS, only the DA and PA automated components selection methods are reported,
because they give the best results in general.
As can be observed from the last two data sets, and also from our experience on other data sets, the automated procedures for ICS tend to select too many components.
One possible reason is that these procedures rely on the Gaussian distribution of the main bulk of the data, and such an assumption may not be fulfilled in practice.
Therefore, we propose to use the scree plot as an alternative visual selection method  that leads to a third type of results for ICS. The scree plot is very well-known for PCA \citep{jolliffe2002} and can be applied in the same way for ICS except that for the scatter pair $\cov-\cov4$, the eigenvalues are to be interpreted in terms of kurtosis (see \citet{Tyler2009}), instead of variance for PCA. The scree plots for the three examples are given on Figure \ref{fig:scree}. For the three scree plots, some invariant components (two for the Glass and the Reliability data sets and three for the HTP data) clearly differ from the other components due to their high eigenvalues. The results for these components selection are reported in the last row of Table \ref{tab:realex}.

\begin{figure}
  \center
  \includegraphics[width=0.7\textwidth]{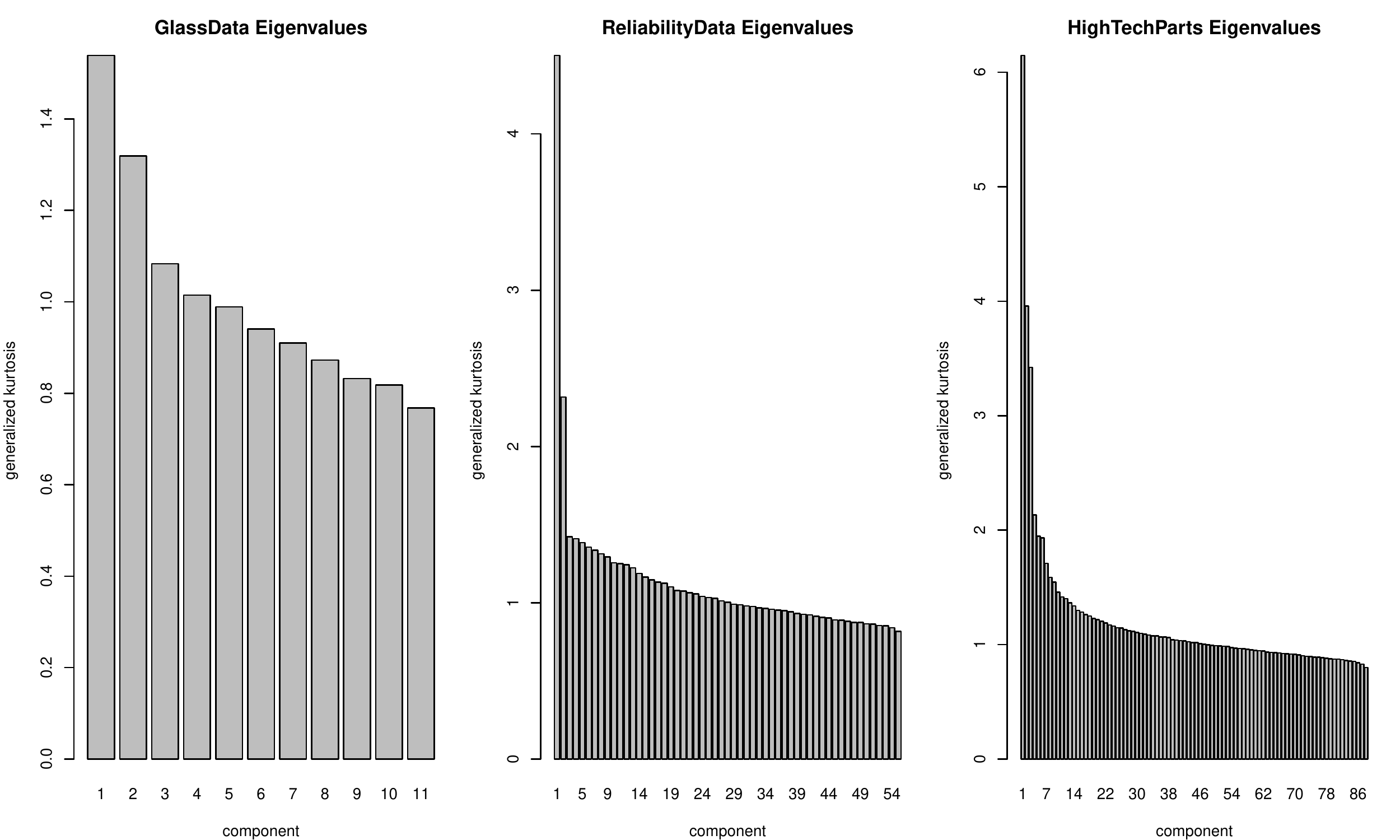}\\
	\caption{Scree plots for ICS with $\cov-\cov4$ for the three data sets.}\label{fig:scree}
\end{figure}

\begin{table}[!h]
\centering
\begingroup\scriptsize
 \begin{tabular}{l*{3}{ccc}}

  \hline
 & \multicolumn{3}{c}{Glass}& \multicolumn{3}{c}{Reliability} & \multicolumn{3}{c}{HighTech} \\
 &  NTP (/3) & NFP (/109) & $k$ (/11) & NTP (/2) & NFP (/518) & $k$ (/55) & NTP (/2) & NFP (/900) & $k$ (/88) \\

  \hline
	MD & 3 & 4 &  & 2 & 52 &  & 2 & 119 &  \\
  RD & 3 & 15 &  &  &  &  & 2 & 243 &  \\
	RD GM & 3 & 7 &  &  &  &  & 2 & 223 &  \\
	\hline
	{\em Best selection} &&&&&&&&&\\
  ICS $\cov - \cov_4$  & 3 & 3 & 2 & 2 & 1 & 1 & 2 & 0 & 1 \\
  PCA & 3 & 9 & 5 & 2 & 41 & 52 & 2 & 21 & 1 \\
  PCA std & 3 & 4 & 2 & 2 & 22 & 40 & 2 & 25 & 6 \\
  ROBPCA & 3 & 13 & 5 &  &  &  & 2 & 50 & 1 \\
	\hline
 {\em Automated selection} &&&&&&&&&\\
  ICS $\cov - \cov_4$ DA & 3 & 3 & 2 & 2 & 23 & 12 & 2 & 39 & 14 \\
  ICS $\cov - \cov_4$ PA & 3 & 3 & 2 & 2 & 42 & 28 & 2 & 87 & 50 \\
  PCA & 1 & 5 & 1 & 0 & 6 & 12 & 2 & 24 & 3 \\
  PCA std & 1 & 4 & 1 & 2 & 31 & 20 & 2 & 28 & 4 \\
  ROBPCA & 3 & 17 & 1 &  &  &  & 2 & 80 & 2 \\
	\hline
  {\em Scree plot selection} &&&&&&&&&\\
  ICS  $\cov - \cov_4$ & 3 & 3 & 2 & 2 & 1 & 2 & 2 & 5 & 3 \\
  \hline
\end{tabular}
\caption{NTP, NFP and number $k$ of selected components for the three real data examples.}
\label{tab:realex}
\endgroup
\end{table}

\subsection{Glass recycling}
The so-called glass data set is analyzed by \citet{cerioli2011error} and consists of  112 glass fragments collected for recycling, of which 109 are true glass fragments and 3 are contaminated ceramic glass fragments. The 11 variables are the log of spectral measures recorded for each fragment.
For all methods, the outliers are flagged by using cut-offs defined through simulations at the 5\% level so that results are comparable with  \citet{cerioli2011error}.
For this example, ICS detects the three outliers and has only three false detections, and the results are the same for the three types of components selection (best, automated or scree plot).
ICS has the highest performance in comparison with the competitors considered here  but also in comparison with the results reported in Table 6 of \citet{cerioli2011error}.
The non robust Mahalanobis distance, which is equivalent to ICS with $\cov - \cov_4$ when all components are selected, also performs quite well on this example. All three outliers are detected, and there are only four false detections compared to three when two invariant components are selected among the eleven.
%
%
%

For the next two examples, to obtain an acceptable quality control performance, true outliers should be detected with  up to 2\% of observations flagged as outliers, taking into account the true outliers and the false detections.
Moreover, the results for these two examples are readily reproducible using the R code provided in the supplementary material of the present paper.

\subsection{Reliability Data}
The Reliability data are available in the R package REPPlab \citep{REPPlab} and contain 55 variables measured on 520 units during a production process. The quality standards for this process are respected for each variable, and the objective is to detect some potential multivariate faulty units representing less than 2\% of the 520 observations. In \citet{fischer2016}, two observations (414 and 512) are detected as the most severe outliers. For simplicity, we consider these two observations as the only true outliers. However, there may be other outliers, and the NFP numbers should be viewed with caution for this example in comparison with the other two data sets, where some auxiliary information concerning the true outliers is known.
For this example and the next one, the outliers are flagged by using cut-offs defined through simulations at the 2\% level.

In Table \ref{tab:realex}, the results for the MCD are not reported. As mentioned in \citet{fischer2016}, computing the MCD (at least with a breakdown point equal or larger than 25\%) is not possible on this data set because 497 observations among the 520 take exactly the same value on the 24th variable. Note that from our experience, this problem occurs quite
recurrently on real data sets in some industrial context, and, as illustrated below, removing such variables may lead to a loss of relevant information.
This is, however, not a problem for ICS when using the scatter pair $\cov - \cov_4$, and the method shows very good performance for the Reliability data when selecting only two components.
The only observation declared as a false positive is observation 57, which is also flagged as an outlier in \citet{fischer2016} (although not as extreme as the other two).
The selection of  two components is suggested by the scree plot analysis.
The automated selection procedures or the use of all invariant components (Mahalanobis distance) show poor performance with a number of false positives higher than the 2\% rate that is acceptable. PCA is even less successful with many false positive in the best selection case and sometimes no detection at all when the components selection is automated.

Moreover, when the number of selected invariant components is small, ICS makes the detected outliers easy to interpret by drawing scatter plots of the selected components and by observing the correlations between the components and the original variables.
Figure \ref{fig2:reliability} illustrates this point. The two selected invariant components are plotted on the left panel and clearly lead to the identification of observations 414 and 512 as outliers. When calculating the correlations between these invariant components and the 55 original variables, it appears that they are essentially correlated with variables 22 and 24. These two variables are thus plotted on the right panel of Figure \ref{fig2:reliability} and reveal that observation 414 (resp. 512) combines in an unusual way a high (resp. small) value on variable 22 with a small (resp. large) value on variable 24. Note that removing variable 24 in order to compute the MCD estimate precludes the ability to detect the two outliers.

\begin{figure}
  \center
  \includegraphics[width=\textwidth]{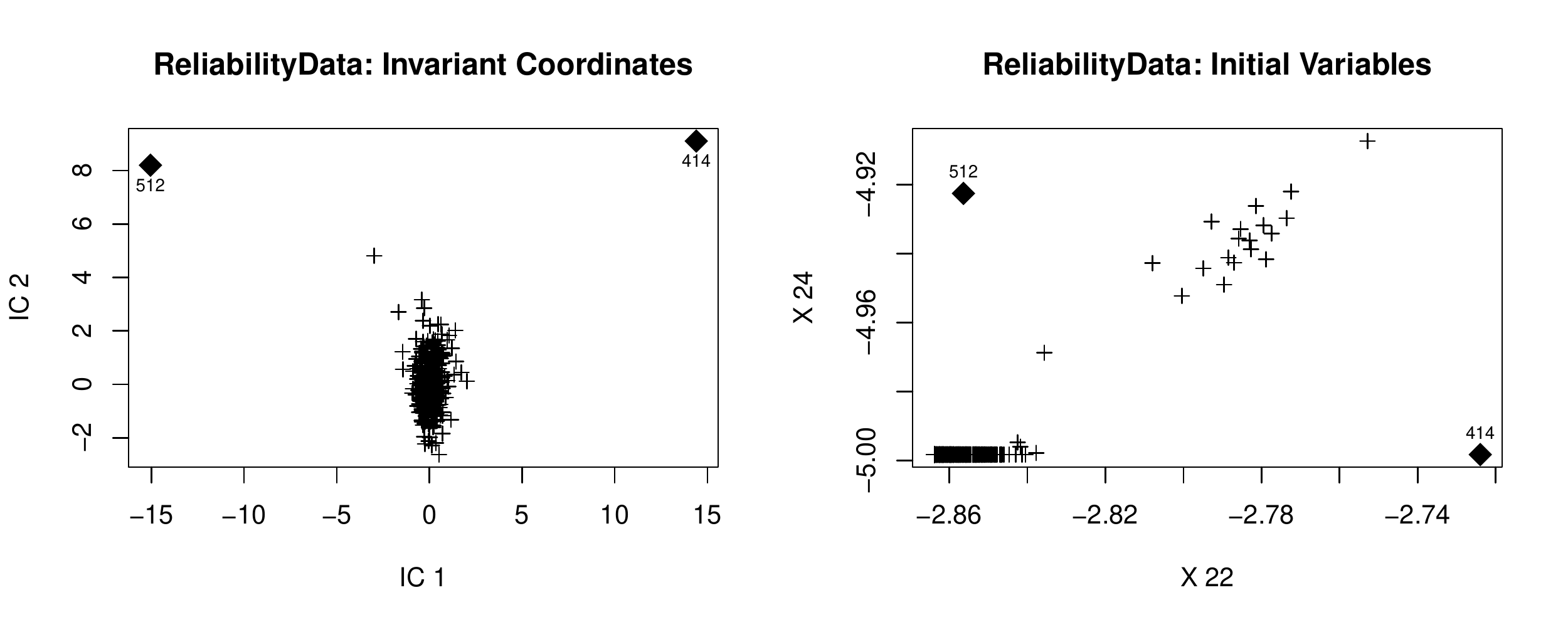}\\
  \caption{Scatter plot of the first two invariant components (left panel) and scatter plot of the variables numbered 22 and 24 (right panel) for the Reliability data set.}\label{fig2:reliability}
\end{figure}

\subsection{High-tech parts}
The third real data set contains 902 high-tech parts designed for consumer products and characterized by 88 electronic measures; it is available in the R package ICSOutlier \citep{ICSOutlier}. To anonymize the data collected, the measures have been mean-centered. We do not have access to the original data, but we know that they were cleaned from univariate outliers using some preliminary standard quality control rules. No multivariate outlier detection method was applied and the parts were sold. However, two parts (denoted by R1 and R2 in what follows) among the 902 were found to be defective and returned to the manufacturer. Our objective is to check whether these two observations could have been detected before being sold, using some multivariate outlier detection method in an unsupervised way, with less than 2\% of observations flagged as outliers.

From Table \ref{tab:realex}, the result based on only one component (best selection) for ICS is perfect, with two outliers detected and no false detection. The results are  much worse for all other methods, with too many false detections. This is especially true when considering the Mahalanobis distance with no selection of components. The results for ICS are rather mediocre when using the DA or (even worse) the PA automated selection methods which tend to select too many components. Using the scree plot, however, leads unambiguously to a more drastic selection, with three eigenvalues larger than the others. Using three components leads to good performance, with five NFP and all together seven detected outliers, which is less than 2\% and thus acceptable.
Figure \ref{fig:htp} gives more insight on the influence of the number of selected invariant components on the detection performance and echoes Figure \ref{fig:MD1}. The six scatter plots give the squared ICS distances when the number of components increases. The top-left plot corresponds to one component, which is the best possible selection. Then, the NFP increases when more components are selected. The bottom-left plot corresponds to DA selection, while the bottom-middle plots correspond to PA selection. On the bottom-right plot, all 88 components are taken into account, which corresponds to the squared Mahalanobis distance, and the result is the worst. Note that for this data set, PCA performs better than the Mahalanobis distance even if the number of NFP is still unacceptable.

\begin{figure}
  \center
  \includegraphics[width=\textwidth]{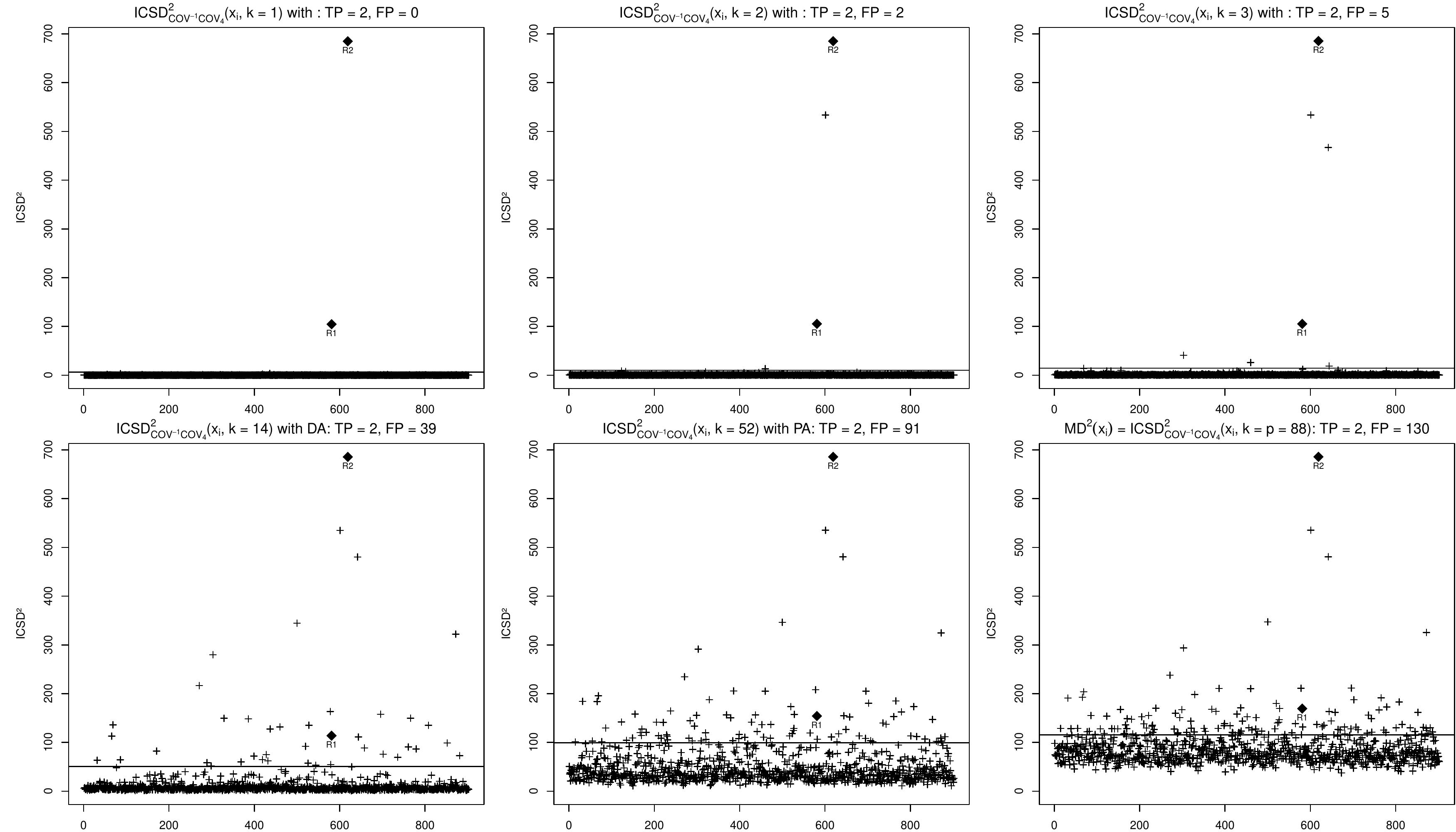}\\
  \caption{Plots of the squared ICS distances for different numbers of invariant components selected for the HTP data set.}\label{fig:htp}
\end{figure}

Finally, ICS is shown to be appropriate for the three data sets when using the scree plot selection method, while the performance of its competitors depends on the data set.

\section{Conclusion and perspectives}\label{sec_concl}

The remarkable theoretical properties of ICS are confirmed in the context of multivariate outlier detection with a small proportion of outliers.
In particular, the ability of ICS to recover the Fisher's linear discriminant subspace in the case where group identifications are unknown has been verified on simulations, with a majority of the data following a Gaussian distribution, but also on some real data set where the Gaussian assumption is not true anymore.
So, as stated for Linear Discriminant Analysis by \citet{hastie01statisticallearning}, it seems that the applicability of ICS extends beyond the realm of Gaussian data.
However, this remark does not apply to the components selection procedures we propose.
From our simulation study, we advocate the use of some selection methods such as DA or PA. But such methods are not convincing when analyzing real data sets as they tend to select too many components. The reason is certainly the fact that the majority of the data does not follow a Gaussian distribution while the cut-offs we propose depend heavily on this assumption.
The data analysis of real data sets highlights the advantage of using the scree plot for selecting the number of components.

Contrary to PCA, the method is not only orthogonal invariant but also scale invariant and is aimed at detecting outliers.
More precisely, the present paper demonstrates the good performance of ICS, when using the scatter pair $\cov - \cov_4$
and selecting the first components in a context of a small proportion of outliers.
The simulation study together with the data analysis illustrates that ICS consistently detects outliers, when they are present, with a small proportion of false detections, while the success of its competitors depends more on the data set under study. This is particularly true for PCA whose results depend a lot on the way the data are scaled. If the outliers are not concentrated in a small dimension subspace, ICS is equivalent to the Mahalanobis distance. For large dimensions and when outliers are contained in a small dimensional subspace, using ICS may improve greatly with respect to the Mahalanobis distance as illustrated by some theoretical properties and applications.
Moreover, selecting a small number of invariant components makes outlier interpretation much easier.

A perspective of the work is to consider multiple testing procedures for the choice of the cut-off for the distances as proposed by \citet{cerioli2010multivariate} and \citet{cerioli2011error}. Moreover, instead of defining cut-offs independently for the components selection and the outlier detection steps, it would be of interest to propose some alternative which would control the overall false positive rate of the global procedure.
Another perspective is to consider the case of a large proportion of outliers. In such a context, the scatter pair choice has to be revisited together with the components choice.
If outliers are contained in a small dimensional subspace, the $\cov - \cov_4$ pair, even if it is not robust, may still be a good alternative given the ICS theoretical properties. However, small kurtosis values are now also of interest, and thus invariant components associated with
small eigenvalues should be examined. In such a context, the recent papers \cite{nordhausen2016} and \cite{NordhausenOjaTylerVirta2017} are of particular interest.
The problem of high dimension and small sample size is also relevant in our industrial context for some particular applications.
The adaptation of ICS to such data sets is a work in progress.


\section{Acknowledgements}
The work of Klaus Nordhausen was partly supported by the Academy of Finland (grant 268703).
The article is based upon work from CRoNoS COST Action IC1408, supported by COST (European Cooperation in Science and Technology).

\section*{References}
  \bibliographystyle{elsarticle-harv}
  \bibliography{Bibli5}





%

\appendix
\section{Proof of Proposition 1}\label{proofprop1}
Let us denote by $\bo M$ the $p\times q$ matrix whose columns contain the vectors $\bs \mu_h$, $h=1,\ldots,q$.
Given the affine invariance property of ICS, we assume w.l.o.g. that
$\bs\mu_0=\bo 0$, that $\bs\Sigma_W=\bo I_p$ where $\bo I_p$ denotes the $p\times p$ identity matrix and that the last $p-q$ rows of $\bo M$ contain zeros so that:
$\bo M=[\bo M_q,\bo 0]'$ where $\bo M_q$ is a $q\times q$ matrix. In the following, we also assume for convenience that the dimension of the vector space spanned by the columns
of $\bo M$ is $q$. Otherwise, we would have to reparametrize the mixture distribution with a number of clusters smaller than $q+1$ and equal to one plus the dimension of the
subspace spanned by the columns of $\bo M$. Under these assumptions, we determine that the total covariance matrix can be written as
$$\bs\Sigma=\left[\begin{array}{cc}
\bs\Sigma_q & \bo 0\\
\bo 0 & \bo I_{p-q}
\end{array}
\right] $$
where $\bs\Sigma_q$ denotes a non-singular $q\times q$ matrix.
We also denote by $\bo X_q$ (resp. $\bs \mu_{\bo X_q}$) the first $q$ rows of $\bo X$ (resp. of $\bs \mu_{\bo X}$).

Under the mixture distribution (\ref{model}), we have:
\begin{eqnarray*}
d^2(\bo X) & = & (\bo X_q - \bs \mu_{\bo X_q})'\bs \Sigma_q^{-1} (\bo X_q - \bs \mu_{\bo X_q}) + \sum_{i=q+1}^p X_{i}^2,\\
d^2_R(\bo X) & = & \sum_{i=1}^p X_{i}^2.
\end{eqnarray*}

We make use of the Lindeberg-Feller central limit theorem as recalled for instance in \cite{greene2012}, p.1119, which states that:

Let $Y_i$, $i=1,\ldots,n,$ be a sequence of independent random variables with finite means $m_i$ and finite positive variance $\sigma^2_i$.
Let
$$\bar{m}_n=\frac{1}{n} \sum_{i=1}^n m_i \;\; \mbox{ and } \;\; \bar{\sigma}^2_n=\frac{1}{n} \sum_{i=1}^n \sigma_i^2.$$
If $\displaystyle \lim_{n \rightarrow +\infty} \max(\sigma_i)/(n\bar{\sigma}_n) =0$ and
$\displaystyle \lim_{n \rightarrow +\infty} \bar{\sigma}_n^2 =\bar{\sigma}^2 < \infty$ then
$$ \sqrt{n} \left(\bar{Y}_n - \bar{m}_n \right) \underset{n \to +\infty}{\overset{d}{\longrightarrow}} {\cal N}(0,\bar{\sigma}^2)$$

We recall that $\bo X_{no}$ follows a normal distribution ${\cal N}(\bo 0,\bo I_p)$ and $\bo X_{o,h}$ follows a normal distribution
${\cal N}(\bs \mu_h,\bo I_p)$, with the last $p-q$ coordinates of $\bs \mu_h$ equal to 0 and  $h=1,\ldots,q$. We assume that $\bo X_{no}$ and $\bo X_{o,h}$,  for $h=1,\ldots,q$,
are independent, and we are interested in the behavior of the difference between the squared distance of $\bo X_{o}$ and of $\bo X_{no}$ for both Mahalanobis distances,
when dimension $p$ increases and $q$ is fixed.
We first look at the convergence in distribution of the difference of the robust Mahalanobis distances when $p$ grows to infinity and we have that
$$ d_R^2(\bo X_{o,h}) -d_R^2(\bo X_{no}) = \sum_{i=1}^p \left( X^2_{o,h,i} - X^2_{no,i}\right).$$
Let denote $Y_i=X^2_{o,h,i} - X^2_{no,i}$, $i=1,\ldots,p$, and check the Lindeberg-Feller theorem assumptions for this sequence when $p$ goes to infinity. Note that we apply the theorem to $p$ and not to $n$.
Given that the vectors $\bo X_{no}$ and $\bo X_{o,h}$ are Gaussian vectors with uncorrelated components and are independent between them, the random variables $Y_i$ are independent.
For $i=1,\ldots,p$, $X^2_{no,i}$ follows a chi-squared distribution with one degree of freedom and $X^2_{o,h,i}$ follows the same distribution for $i=q+1, \ldots,p$, while it follows a non central chi-squared distribution with one degree of freedom and noncentrality parameter $\mu_{h,i}^2$, for $i=1,\ldots,q$.
So the expectation of $Y_i$, $m_i=0$ for $i=q+1,\ldots,p$ and $m_i=\mu_{h,i}^2$ for $i=1,\ldots,q$.
The variance of $Y_i$ is finite, positive and equal $\sigma_i^2=4$ for $i=q+1,\ldots,p$ and $4 (\mu_{h,i}^2+1)$ for $i=1,\ldots,q$.
So $\bar{\sigma}_p^2=(4/p) [p+\sum_{i=1}^q \mu_{h,i}^2]$ and tends to 4 when $p$ goes to infinity.
Let denote $\sigma_{\max}^2= \max(\sigma^2_i)=4 (\max\{\mu_{h,i}^2, i=1,\ldots,q\}+1)$, which does not depend on $p$, then we have
$$ \lim_{p \rightarrow +\infty} \sigma_{\max}/(p\bar{\sigma}_p) = 0.$$
We conclude that:
$$ \frac{1}{\sqrt{p}} \left(d_R^2(\bo X_{o,h}) -d_R^2(\bo X_{no}) - \sum_{i=1}^q \mu^2_{h,i} \right) \underset{p \to +\infty}{\overset{d}{\longrightarrow}} {\cal N}(0,4).$$

\vspace{5mm}

\noindent For the non-robust Mahalanobis distance, we have:
\begin{eqnarray*}
d^2(\bo X_{o,h}) -d^2(\bo X_{no})  & = & (\bo X_{o,h,q} - \bs \mu_{\bo X_q})'\bs \Sigma_q^{-1} (\bo X_{o,h,q} - \bs \mu_{\bo X_q}) -
(\bo X_{no,q} - \bs \mu_{\bo X_q})'\bs \Sigma_q^{-1} (\bo X_{no,q} - \bs \mu_{\bo X_q})\\
 & &  + \sum_{i=q+1}^p \left( X^2_{o,h,i} - X^2_{no,i}\right).
 \end{eqnarray*}
where we denote by $\bo X_{o,h,q}$ (resp. $\bo X_{no,q}$) the first $q$ rows of $\bo X_{o,h}$ (resp. of $\bo X_{no}$).
$$\mbox{Let }\; Y_1= (\bo X_{o,h,q} - \bs \mu_{\bo X_q})'\bs \Sigma_q^{-1} (\bo X_{o,h,q} - \bs \mu_{\bo X_q}) -
(\bo X_{no,q} - \bs \mu_{\bo X_q})'\bs \Sigma_q^{-1} (\bo X_{no,q} - \bs \mu_{\bo X_q}) $$
and $Y_i=X^2_{o,h,q+i-1} - X^2_{no,q+i-1}$, for $i=2,\ldots,p-q+1$.
As previously, the $Y_i$s are independent.
As the difference of two non degenerate quadratic forms for $q$-dimensional Gaussian vectors, the expectation $m_1$ of $Y_1$ is finite and its variance $\sigma^2_1$  is finite and positive and does not depend on $p$.
For $i=2,\ldots,p-q+1$, $X^2_{no,i}$ and $X^2_{o,h,i}$ follow a chi-squared distribution with one degree of freedom and so the expectation
$m_i$ of $Y_i$ equals 0 and the variance $\sigma_i^2=4$.
So $\bar{m}_p=m_1/(p-q+1)$, $\bar{\sigma}_p^2=[4(p-q)+\sigma_1^2]/(p-q+1)$ and tends to 4 when $p$ goes to infinity.
Let $\sigma_{\max}^2= \max(\sigma^2_i)=\max(4,\sigma_1^2)$, which does not depend on $p$, then we have
$$ \lim_{p \rightarrow +\infty} \sigma_{\max}/(p\bar{\sigma}_p) = 0.$$
We conclude that:
$$ \frac{\sqrt{p}}{p-q+1} \left(d^2(\bo X_{o,h}) -d^2(\bo X_{no}) - m_1 \right) \underset{p \to +\infty}{\overset{d}{\longrightarrow}} {\cal N}(0,4)$$
which gives the final result.

\section{Derivation of the eigenvalues and eigenvectors of the simultaneous diagonalization of $\cov$ and $\cov_4$ for particular mixtures}\label{sec:CompChap2}

Let $\bo X = (X_1, \dots ,X_p)'$ be a $p$-multivariate real random vector and denote by $\bo F_{\bo X}$ the distribution of $\bo X$. We assume that $p>2$ and that
$\bo F_{\bo X}$ admits fourth moments.
The functional versions of $\cov(\bo F_{\bo X}) $ and $\cov_4(\bo F_{\bo X}) $ which are consistent at the Gaussian distribution are given by:
$$ \cov(\bo F_{\bo X})=\mathbb{E}\left[(\bo X-\mathbb{E}(\bo X))(\bo X-\mathbb{E}(\bo X))'\right ],$$
$$\cov_4(\bo F_{\bo X})=\frac{1}{p+2}\,\mathbb{E}\left[(\bo X-\mathbb{E}(\bo X))'\cov^{-1}(\bo F_{\bo X})(\bo X-\mathbb{E}(\bo X))(\bo X-\mathbb{E}(\bo X))(\bo X-\mathbb{E}(\bo X))'\right ]. $$
We denote by $\rho_1(\bo F_{\bo X}) \geq \rho_2(\bo F_{\bo X}) \ldots \geq \rho_p(\bo F_{\bo X})$ the eigenvalues of $\cov^{-1}(\bo F_{\bo X})\cov_4(\bo F_{\bo X}) $ in decreasing
order.
The cases we consider below correspond or are very similar to Cases 1, 2 and 5 from the simulations section. For such mixtures and the scatter pair $\cov$-$\cov4$, it is possible to derive conditions under which the ICS method recovers the direction of outlying observations.

\subsection{Case 1:  mean-shift outlier model}
Let $\bo F_{\bo X}$ be a mixture of two Gaussian distributions with different location parameters and the same definite positive covariance matrix $\bs \Sigma_1$:
\begin{equation}\label{case1}
\bo X \sim (1-\epsilon)~ \mathcal{N}(\bo 0_p, \bs \Sigma_1) + \epsilon ~\mathcal{N}(\bs \mu, \bs \Sigma_1)
\end{equation}
with $\epsilon<0.5$ and $\bs \mu \neq \bo 0_p$ a $p$-vector.\\
In this case, the behavior of ICS has already been established. This result is explicitly presented in \cite{Tyler2009} as a particular case of the Theorem 3. \cite{caussinus1994} and \cite{caussinus1995metrics} also derived this condition as a particular case of the symmetrized version of the one-step $W$-estimate used as one of the scatter matrix while the other was the usual covariance matrix. Finally, \cite{Alashwali2016} also recovered the same result by using arguments from \cite{pena2001multivariate} focusing on projection pursuit based on the kurtosis. As a reminder, the result is the following.
\vspace{5pt}
\begin{prop}\label{prop:case1}~\\
Let $\bo X$ follow the distribution (\ref{case1}), the eigenvalues of $\cov^{-1}(\bo F_{\bo X})\cov_4(\bo F_{\bo X}) $ are such that either:\\
\begin{tabular}{ccccrl}
		(a) &	$\rho_1(\bo F_{\bo X})~ >$&$ \rho_2(\bo F_{\bo X})  $&$=$&$\dots ~~~~= ~\rho_p(\bo F_{\bo X})$  &if $\displaystyle \epsilon < (3 - \sqrt{3})/6$ ($\approx$ 21\%),\\
		(b) &	$\rho_1(\bo F_{\bo X})~=$&$\dots$&$=$&$ \rho_{p-1}(\bo F_{\bo X}) > ~\rho_p(\bo F_{\bo X})$ &if $\displaystyle \epsilon > (3 - \sqrt{3})/6$,\\
		(c) & $\rho_1(\bo F_{\bo X}) ~=$&$ \rho_2(\bo F_{\bo X}) $&$=$&$\dots ~~~~ =~ \rho_p(\bo F_{\bo X})$ &if  $\displaystyle \epsilon = (3 - \sqrt{3})/6$.
		\end{tabular}
\vspace{2mm}

Moreover, if (a) (resp. (b)) holds then the eigenvector of $\cov^{-1}(\bo F_{\bo X})\cov_4(\bo F_{\bo X}) $ associated with $\rho_1(\bo F_{\bo X})$ (resp. $\rho_p(\bo F_{\bo X})$)
is proportional to $\bs \Sigma_1^{-1} \bs \mu$.
\end{prop}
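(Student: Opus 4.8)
The plan is to use the affine invariance of ICS to bring the mixture into a canonical form, compute the two scatter functionals explicitly, and read off the eigenstructure; the problem then collapses to comparing a single eigenvalue with $1$, which amounts to a kurtosis computation for a univariate two-component Gaussian mixture.

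First I would reduce the model. Since $\cov(\bo F_{\bo X})$ and $\cov_4(\bo F_{\bo X})$ are both affine equivariant, the eigenvalues of $\cov^{-1}(\bo F_{\bo X})\cov_4(\bo F_{\bo X})$ are affine invariant, and since scatter functionals are location invariant only the linear part of the transformation matters. Applying $\bo X \mapsto \bo R\,\bs\Sigma_1^{-1/2}\bo X$ with $\bo R$ an orthogonal matrix sending $\bs\Sigma_1^{-1/2}\bs\mu$ onto a positive multiple $\delta\bo e_1$, one may assume w.l.o.g.\ that $\bs\Sigma_1=\bo I_p$ and $\bs\mu=\delta\bo e_1$ with $\delta>0$. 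Then $\bo X=(X_1,\bo X_2')'$ with $X_1\sim(1-\epsilon)\mathcal{N}(0,1)+\epsilon\mathcal{N}(\delta,1)$ and $\bo X_2\sim\mathcal{N}(\bo 0_{p-1},\bo I_{p-1})$ independent of $X_1$. Writing $\sigma^2=\var(X_1)=1+\epsilon(1-\epsilon)\delta^2$ and $\tilde X_1=X_1-\mathbb{E}(X_1)$, one gets $\cov(\bo F_{\bo X})=\diag(\sigma^2,1,\ldots,1)$, while for $\cov_4$ one has $r^2=\tilde X_1^2/\sigma^2+\|\bo X_2\|^2$; by independence and the vanishing of odd Gaussian moments the cross block $\mathbb{E}[r^2\tilde X_1\bo X_2']$ is zero, the lower $(p-1)\times(p-1)$ block is $\tfrac{1}{p+2}\bigl(\mathbb{E}[\tilde X_1^2/\sigma^2]\bo I_{p-1}+\mathbb{E}[\|\bo X_2\|^2\bo X_2\bo X_2']\bigr)=\tfrac{1}{p+2}(1+p+1)\bo I_{p-1}=\bo I_{p-1}$, and the $(1,1)$ entry is $\tfrac{1}{p+2}\bigl(\mathbb{E}[\tilde X_1^4]/\sigma^2+(p-1)\sigma^2\bigr)$. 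Hence $\cov^{-1}(\bo F_{\bo X})\cov_4(\bo F_{\bo X})$ is diagonal, with eigenvalue $1$ of multiplicity $p-1$ (eigenspace $\bo e_1^{\perp}$) and the single eigenvalue
\[
\rho=\frac{1}{p+2}\Bigl(\frac{\mathbb{E}[\tilde X_1^4]}{\sigma^4}+p-1\Bigr)
\]
with eigenvector $\bo e_1$.

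Next I would compare $\rho$ with $1$: clearly $\rho\gtrless 1$ iff $\mathbb{E}[\tilde X_1^4]\gtrless 3\sigma^4$, i.e.\ iff the excess kurtosis of $X_1$ is positive, negative or zero. Computing the central moments of the mixture through the two-point mixing law yields $\mathbb{E}[\tilde X_1^4]-3\sigma^4=\epsilon(1-\epsilon)\delta^4\bigl(1-6\epsilon(1-\epsilon)\bigr)$; since $\delta>0$ and $0<\epsilon<1/2$, the sign is that of $1-6\epsilon(1-\epsilon)=6\epsilon^2-6\epsilon+1$, whose only root in $(0,1/2)$ is $(3-\sqrt3)/6$. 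Reordering the eigenvalues according to the sign of $\rho-1$ gives cases (a), (b) and (c). For the eigenvector statement, from $\bo Y=\bo C\bo X$ with $\bo C=\bo R\,\bs\Sigma_1^{-1/2}$ one has $\cov(\bo Y)=\bo C\cov(\bo X)\bo C'$ and $\cov_4(\bo Y)=\bo C\cov_4(\bo X)\bo C'$, so the eigenvectors transform by $\bo b_{\bo X}=\bo C'\bo b_{\bo Y}$; taking $\bo b_{\bo Y}=\bo e_1$ and using $\bo R'\bo e_1\propto\bs\Sigma_1^{-1/2}\bs\mu$ gives $\bo b_{\bo X}\propto\bs\Sigma_1^{-1}\bs\mu$, as claimed.

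The only nonroutine part is the explicit fourth-moment bookkeeping for $\cov_4$: the Gaussian block via $\mathbb{E}[\|\bo Z\|^2\bo Z\bo Z']=(d+2)\bo I_d$ in dimension $d=p-1$, and the scalar $\mathbb{E}[\tilde X_1^4]$ via the moments of the mixing distribution; once these are in hand the conclusion is immediate. Beyond that, the main point requiring care is propagating the change of variables consistently — equivariantly for the scatter matrices but via the transpose for the eigenvectors — so that the displayed direction comes out as $\bs\Sigma_1^{-1}\bs\mu$ and not $\bs\mu$ or $\bs\Sigma_1\bs\mu$.
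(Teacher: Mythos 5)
Your proof is correct: the reduction to $\bs\Sigma_1=\bo I_p$, $\bs\mu=\delta\bo e_1$ is legitimate by affine equivariance, the block computation of $\cov_4$ checks out (the cross block vanishes because $\mathbb{E}[\bo X_2]=\bo 0$ and $\mathbb{E}[\|\bo X_2\|^2\bo X_2]=\bo 0$, even though $\mathbb{E}[\tilde X_1^3]\neq 0$), and the identity $\mathbb{E}[\tilde X_1^4]-3\sigma^4=\epsilon(1-\epsilon)\delta^4\bigl(1-6\epsilon(1-\epsilon)\bigr)$ is exactly right, with $(3-\sqrt3)/6$ the unique root of $6\epsilon^2-6\epsilon+1$ in $(0,1/2)$; the eigenvector back-transformation via $\bo b_{\bo X}=\bo C'\bo b_{\bo Y}$ correctly lands on $\bs\Sigma_1^{-1}\bs\mu$. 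The one structural difference from the paper is that the paper does not prove this proposition at all: it states the result as a known special case of Theorem~3 of Tyler et al.\ (2009), also recovered by Caussinus et al.\ and by Alashwali and Kent via Pe\~na--Prieto kurtosis-projection arguments. What you have written is precisely the computation the paper carries out for its sibling Propositions~\ref{prop2}, \ref{prop2bis} and \ref{prop:case5} (barrow wheel, symmetric contamination, scale shift): reduce to a diagonal canonical form, observe that $\cov^{-1}\cov_4$ is diagonal, and compare the distinguished diagonal entry $\tfrac{1}{p+2}\bigl(\mathbb{E}[\tilde X_1^4]/\sigma^4+p-1\bigr)$ with $1$ via a univariate fourth-moment calculation. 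Your version therefore buys a self-contained, elementary derivation in the exact style of the appendix, at the cost of not exhibiting the result as an instance of the more general Tyler et al.\ theorem (which covers arbitrary scatter pairs and proportional within-group covariances, not just $\cov$--$\cov_4$ with equal covariances).
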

~
\begin{rem}
In the simulation framework, Case 1 corresponds to model (\ref{case1}) with a percentage of contamination equal to 2\% and outliers  are highlighted  on the first component of ICS.
\end{rem}

\subsection{Case 2a: the barrow wheel distribution}
This distribution was suggested by Stahel and M\"{a}chler in the discussion in \cite{Tyler2009} as a ``benchmark distribution for multivariate tools''. This so-called barrow wheel distribution was first introduced in \cite{hampel1986robust}. Here, we simplify the model without loss of generality by considering no rescaling nor rotation  because of the affine invariance property of the ICS method. \\
Let the distribution of $\bo X$ be:
\begin{equation}\label{case2}
\bo X \sim (1-\epsilon) ~\mathcal{N}(\bo 0_p, \bs \Sigma_{21} ) + \epsilon~ H
\end{equation}
where $\bs \Sigma_{21} = \diag(\sigma_{11}^2, 1, \dots,  1)$ and let $\bo Y=(Y_1,\ldots,Y_p)'$ distributed according to $H$. $H$ is such that $ Y_1$ has a symmetric distribution with $ Y_1^2 \sim \chi_k^2$ and is independent of $ Y_2, \dots,  Y_p \sim \mathcal{N}(\bo 0_p, \bs \Sigma_{22})$ with $\bs \Sigma_{22} = \sigma_{22}^2 \bo I_{p-1}$. With such a model, the outliers are generated along the first direction on both sides of the main data.\\

\cite{Tyler2009} prove in the discussion that their Theorem~4 is still valid under the barrow wheel distribution. Restricting the analysis to $\cov $ and $\cov_4$ enables us to derive a more precise result.
\vspace{5pt}
\begin{prop}\label{prop2}~\\
Let $\bo X$ follow the distribution (\ref{case2}), the eigenvalues of $\cov^{-1}(\bo F_{\bo X})\cov_4(\bo F_{\bo X}) $ are such that either:\\
	\begin{tabular}{ccccr}
		(a) &	$\rho_1(\bo F_{\bo X})~ >$&$ \rho_2(\bo F_{\bo X})  $&$=$&$\dots ~~~~= ~\rho_p(\bo F_{\bo X})$ ,\\
		(b) &	$\rho_1(\bo F_{\bo X})~=$&$\dots$&$=$&$ \rho_{p-1}(\bo F_{\bo X}) > ~\rho_p(\bo F_{\bo X})$,\\
		(c) & $\rho_1(\bo F_{\bo X}) ~=$&$ \rho_2(\bo F_{\bo X}) $&$=$&$\dots ~~~~ =~ \rho_p(\bo F_{\bo X})$.
		\end{tabular}
		~\\
\vspace{5pt}
		~\\
with $\rho_1(\bo F_{\bo X}) = \dfrac{1}{p+2}\left(\dfrac{3(1-\epsilon)\sigma_{11}^4+\epsilon(2+k)k}{((1-\epsilon)\sigma_{11}^2+\epsilon k)^2}+p-1\right)$\\
~\\
\vspace{5pt}
and $\rho_2(\bo F_{\bo X}) =\dfrac{1}{p+2}\left(\dfrac{3((1-\epsilon)+\epsilon \sigma_{22}^4)}{((1-\epsilon)+\epsilon \sigma_{22}^2)^2}+p-1\right) $.

Moreover, if (a) (resp. (b))  holds then the eigenvector of $\cov^{-1}(\bo F_{\bo X})\cov_4(\bo F_{\bo X}) $ associated with $\rho_1(\bo F_{\bo X})$
(resp. with $\rho_p(\bo F_{\bo X})$) is proportional to $\bo e_1=(1,0,\ldots,0)'$.
\end{prop}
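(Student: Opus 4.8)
The plan is to diagonalise the problem by direct computation: I would evaluate the functionals $\cov(\bo F_{\bo X})$ and $\cov_4(\bo F_{\bo X})$ in closed form, observe that both are diagonal with the last $p-1$ diagonal entries equal, and then simply read off the spectrum of $\cov^{-1}(\bo F_{\bo X})\cov_4(\bo F_{\bo X})$. Since ICS is affine invariant, one may work with model (\ref{case2}) exactly as written, i.e.\ with no rescaling nor rotation. The first observation is that each mixture component is symmetric about the origin ($Y_1$ has a symmetric distribution and the Gaussian parts are centred), so $\mathbb{E}(\bo X)=\bo 0$ and both scatter functionals reduce to their centred forms.

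For the covariance, conditioning on the mixture label gives $\cov(\bo F_{\bo X})=(1-\epsilon)\bs\Sigma_{21}+\epsilon\,\cov(H)$, and since $\cov(H)=\diag(k,\sigma_{22}^2,\dots,\sigma_{22}^2)$ (using $\mathbb{E}(Y_1^2)=k$), this equals $\diag(a,b,\dots,b)$ with $a=(1-\epsilon)\sigma_{11}^2+\epsilon k$ and $b=(1-\epsilon)+\epsilon\sigma_{22}^2$; hence $\cov^{-1}(\bo F_{\bo X})=\diag(a^{-1},b^{-1},\dots,b^{-1})$.

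For $\cov_4$ I would exploit two symmetries of the law of $\bo X$. It is invariant under a sign flip of any single coordinate, which kills every off-diagonal entry of $\mathbb{E}\big[(\bo X'\cov^{-1}(\bo F_{\bo X})\bo X)\,\bo X\bo X'\big]$; and it is invariant under any permutation of the coordinates $2,\dots,p$, so the associated $p-1$ diagonal entries agree. It therefore suffices to evaluate two scalars: the $(1,1)$ entry $c_1$ and a generic $(j,j)$ entry $c_2$ with $j\ge2$. Each is the expectation of a degree-four polynomial in the coordinates; conditioning on the mixture label and using that a centred Gaussian with variance $\sigma^2$ has fourth moment $3\sigma^4$, together with $\mathbb{E}[(\chi_k^2)^2]=k(k+2)$ and the second moments already recorded, reduces the computation to elementary algebra and should yield, after simplification, $\cov^{-1}(\bo F_{\bo X})\cov_4(\bo F_{\bo X})=\diag(c_1/a,\,c_2/b,\dots,c_2/b)$ with $c_1/a=\rho_1(\bo F_{\bo X})$ and $c_2/b=\rho_2(\bo F_{\bo X})$ as displayed. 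Since this matrix is already diagonal, its spectrum is $\{c_1/a\}$ (simple, eigenvector $\bo e_1$) together with $c_2/b$ of multiplicity $p-1$ (eigenspace $\mathrm{span}(\bo e_2,\dots,\bo e_p)$); comparing $c_1/a$ with $c_2/b$ produces exactly the three mutually exclusive cases (a), (b), (c), and in cases (a) and (b) the eigenvector attached to $\rho_1(\bo F_{\bo X})$, respectively $\rho_p(\bo F_{\bo X})$, is precisely the outlying direction $\bo e_1$.

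The step I expect to be delicate is the fourth-moment bookkeeping for $\cov_4$: one has to track the cross terms $\mathbb{E}(X_i^2X_j^2)$, which do not factorise because the mixture correlates the coordinates, and then push the algebra through carefully so that the eigenvalue ratios collapse to the stated closed forms. The remaining ingredients — the two symmetry reductions, the diagonal eigendecomposition, and the final case split on the sign of $c_1/a-c_2/b$ — are routine once the two scatter matrices have been written down.
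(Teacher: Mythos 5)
Your plan follows the same route as the paper's own proof: evaluate $\cov(\bo F_{\bo X})$ and $\cov_4(\bo F_{\bo X})$ in closed form, observe that both are diagonal with the last $p-1$ diagonal entries equal, and read off the spectrum, the eigenvector $\bo e_1$, and the trichotomy (a)/(b)/(c) from the sign of $\rho_1(\bo F_{\bo X})-\rho_2(\bo F_{\bo X})$. The one point where you announce a divergence is in fact the interesting one. You say you will ``track the cross terms $\mathbb{E}(X_i^2X_j^2)$, which do not factorise because the mixture correlates the coordinates''; the paper's proof instead asserts that ``all the $X_i$ are independent'' and accordingly replaces $\mathbb{E}(X_1^2X_j^2)$ by $\gamma_1\gamma_2$, which is what produces the term $(p-1)\gamma_1$ in $\mathbb{E}(d^2X_1^2)$ and hence the displayed closed forms. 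That independence holds within each mixture component but not for the casewise mixture itself: a direct computation gives, for $j\ge 2$, $\mathbb{E}(X_1^2X_j^2)=(1-\epsilon)\sigma_{11}^2+\epsilon k\sigma_{22}^2$, so that $\mathbb{E}(X_1^2X_j^2)-\gamma_1\gamma_2=-\epsilon(1-\epsilon)(1-\sigma_{22}^2)(k-\sigma_{11}^2)$, which vanishes only in degenerate cases such as $\sigma_{22}^2=1$; similarly $\mathbb{E}(X_j^2X_l^2)-\gamma_2^2=\epsilon(1-\epsilon)(1-\sigma_{22}^2)^2$ for $2\le j\neq l$. Carrying the cross terms honestly, as you propose, yields $\rho_1(\bo F_{\bo X})=\frac{1}{p+2}\left(\mathbb{E}(X_1^4)/\gamma_1^2+(p-1)\,\mathbb{E}(X_1^2X_2^2)/(\gamma_1\gamma_2)\right)$ and the analogous expression for $\rho_2(\bo F_{\bo X})$, which collapse to the formulas in the statement only under the factorisation the paper tacitly assumes. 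So do not expect your more careful bookkeeping to ``simplify'' to the displayed eigenvalues in general; be prepared either to flag the discrepancy or to state the exact expressions. None of this affects the qualitative conclusion: your two symmetry reductions (sign flips of individual coordinates and permutations of coordinates $2,\dots,p$) already force $\cov_4(\bo F_{\bo X})$ to be diagonal with the last $p-1$ entries equal, so the three-case structure and the identification of the distinguished eigenvector with $\bo e_1$ stand regardless of which closed form the eigenvalues take.
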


\begin{rem}
In the simulation framework, Case 2 corresponds to model (\ref{case2}) with $k=5$, $\sigma_{11}^2 = 0.1$, $\sigma_{11}^2 = 0.2$ and $\epsilon=2\%$.
 In this situation, $\rho_1(\bo F_{\bo X})~ > \rho_2(\bo F_{\bo X})$ and the outliers are highlighted  on the first component of ICS.
\end{rem}

\begin{proof}\label{proof2}~Let us  compute the eigenvalues of $\cov(\bo F_{\bo X})^{-1} \cov_4(\bo F_{\bo X})$.

\noindent{\textsl{Moments of $Y_1$:}\\}
We can decompose $Y_1$ as $Y_1 = S\, C$, with
  $S = \left\{ \begin{tabular}{cc}
        -1 & with probability 1/2 \\
        1  & with probability 1/2
     \end{tabular} \right.$
		and $C\sim \chi_k$.\\
So, $\mathbb{E}(Y_1)=0$, $\var(Y_1)=\mathbb{E}(Y_1^2)=\mathbb{E}(C^2)=k$ and  $\var(Y_1)=\mathbb{E}(Y_1^2)=2k$.\\

\noindent{\textsl{Computation of $\cov(\bo F_{\bo X})^{-1}$:}\\}
For model (\ref{case2}), the expectation is $\mathbb{E}(\bo X) = \bo 0_p$, the between covariance is $\bs \Sigma_B = \bo 0$ and the within covariance matrix is $\bs \Sigma_W = \diag((1-\epsilon)\sigma_{11}^2+\epsilon k, ((1-\epsilon)+\epsilon \sigma_{22}^2), \ldots)$. So,
$$\cov(\bo F_{\bo X}) = \begin{pmatrix} \gamma_1 & \bo 0\\ \bo 0 & \gamma_2\bo I_{p-1} \end{pmatrix} \text{~~and~~}
\cov(\bo F_{\bo X})^{-1} = \begin{pmatrix} 1/\gamma_1 & \bo 0\\ \bo 0 & 1/\gamma_2\bo I_{p-1} \end{pmatrix},$$
with $\gamma_1 = (1-\epsilon)\sigma_{11}^2+\epsilon k$ and $\gamma_2 = (1-\epsilon)+\epsilon \sigma_{22}^2$.\\

\noindent{\textsl{Computation of $\cov_4(\bo F_{\bo X})$:}\\}
The scatter matrix  based on the fourth moments $\cov_4$ is defined by: $$\cov_4(\bo F_{\bo X}) =  \frac{1}{(p+2)} \mathbb{E}( d^2 ( \bo X - \mathbb{E}(\bo X) )(\bo X -\mathbb{E}(\bo X) )')$$
where $d^2 = d(\bo X)^2 = ||\cov(\bo F_{\bo X})^{-1/2} (\bo X -\mathbb{E}(\bo X))||^2$ is the classical squared Mahalanobis distance.
Here, $\mathbb{E}(\bo X) = \bo 0_p$ so,  $$\cov_4(\bo F_{\bo X}) =\frac{1}{(p+2)} \, \diag(\mathbb{E}(d^2 X_1^2), \dots, \mathbb{E}(d^2 X_p^2))$$ as all the $X_i$ are independent and $d^2=  \tfrac{1}{\gamma_1} X_{1}^2+  \tfrac{1}{\gamma_2}\sum_{l=2}^{p} X_{l}^2$.

\noindent{The first diagonal term is $\mathbb{E}(d^2 X_1^2) = \tfrac{1}{\gamma_1} \mathbb{E}(X_1^4) + (p-1)\gamma_1$.\\}
$\mathbb{E}(X_1^4) $ can be easily expressed since $X_1 \sim (1-\epsilon) Z_1+ \epsilon Y_1$ with $Z_1 \sim  \mathcal{N}(0, \sigma_{11}^2 )$ and $\mathbb{E}( Y_1^4)= \var(Y_1^2)+\mathbb{E}( Y_1^2)^2=(2+k)k$. Then, we  apply the following properties to have an expression for $\mathbb{E}(X_1^4)$:
\begin{itemize}
	\item Additive property of the moments:\\ $\mathbb{E}(X_1^4)  = (1-\epsilon) \mathbb{E}(Z_1^4)+  \epsilon \mathbb{E}(Y_1^4)$
	\item Decomposition of a fourth order moment:\\ $\mathbb{E}(Z_i^4)= \mathbb{E}((Z_i - \mathbb{E}(Z_i))^4) + 4 \mathbb{E}((Z_i - \mathbb{E}(Z_i))^3) \mathbb{E}(Z_i) +6 \mathbb{E}((Z_i - \mathbb{E}(Z_i))^2)\mathbb{E}(Z_i)^2 +4 \mathbb{E}(Z_i - \mathbb{E}(Z_i)) \mathbb{E}(Z_i)^3+ \mathbb{E}(Z_i)^4$.
	\item Computation of moments from a Gaussian distribution: \\
	If $Z \sim \mathcal{N}(\mu, \sigma^2)$, then $ \mathbb{E}((Z - \mu)^{2k})=\left((2k)!\sigma^{2k}\right)/(2^kk!)$ and $ \mathbb{E}((Z - \mu)^{2k+1})=0$.
\end{itemize}
So, $\mathbb{E}(X_1^4)= 3(1-\epsilon)\sigma_{11}^4+\epsilon (2+k)k$.\\
Finally, $\mathbb{E}(d^2 X_1^2) = \tfrac{1}{\gamma_1} ( 3(1-\epsilon)\sigma_{11}^4+\epsilon (2+k)k) + (p-1)\gamma_1$.

\noindent{All the other diagonal terms are equal to $\mathbb{E}(d^2 X_j^2) = \tfrac{1}{\gamma_2} \mathbb{E}(X_j^4) + (p-1)\gamma_2$ for $j=2,\dots,p$.\\}
Since $X_j \sim (1-\epsilon) Z_1 + \epsilon Z_2$ with $Z_1 \sim  \mathcal{N}(0, 1 )$ and $Z_2\sim \mathcal{N}(0,  \sigma_{22}^2)$, we can apply the same procedure as previously and we obtain: $\mathbb{E}(X_j^4) = 3((1-\epsilon)+\epsilon \sigma_{22}^4)$ and so, for $j=2,\dots,p$, $\mathbb{E}(d^2 X_j^2) = \tfrac{1}{\gamma_2}(3((1-\epsilon)+\epsilon \sigma_{22}^4)) + (p-1)\gamma_2$.\\

\noindent{\textsl{Computation of $\cov(\bo F_{\bo X})^{-1}\cov_4(\bo F_{\bo X})$:}\\}
Now we can express $\cov(\bo F_{\bo X})^{-1}\cov_4(\bo F_{\bo X})$ as:
$$\cov(\bo F_{\bo X})^{-1}\cov_4(\bo F_{\bo X}) = \frac{1}{(p+2)} \begin{pmatrix}
\tfrac{1}{\gamma_1}\mathbb{E}(d^2 X_1^2)& \bo 0\\
\bo 0 &
 \tfrac{1}{\gamma_2}\mathbb{E}(d^2 X_j^2) \bo I_{p-1}\end{pmatrix} $$
So, the eigenvalues of $\cov(\bo F_{\bo X})^{-1}\cov_4(\bo F_{\bo X})$ are simply its diagonal terms and the eigenvector associated with $\mathbb{E}(d^2 X_1^2)/ \gamma_1$
is $\bo e_1$.\\
If $\tfrac{1}{\gamma_1}\mathbb{E}(d^2 X_1^2) > \tfrac{1}{\gamma_2}\mathbb{E}(d^2 X_j^2)$ then $\rho_1(\bo F_{\bo X}) >  \rho_2(\bo F_{\bo X})$,\\
$$\mbox{with } \rho_1(\bo F_{\bo X}) = \dfrac{1}{p+2}\left(\dfrac{3(1-\epsilon)\sigma_{11}^4+\epsilon(2+k)k}{((1-\epsilon)\sigma_{11}^2+\epsilon k)^2}+p-1\right),$$

$$\rho_2(\bo F_{\bo X}) =\dfrac{1}{p+2}\left(\dfrac{3((1-\epsilon)+\epsilon \sigma_{22}^4)}{((1-\epsilon)+\epsilon \sigma_{22}^2)^2}+p-1\right)$$

and the eigenvector associated with $\mathbb{E}(d^2 X_1^2)/ \gamma_1$ is $\bo e_1$.
And so Proposition \ref{prop2} is proven.
\end{proof}

The eigenvalues expression can be easily simplified in the case when $\bs \Sigma_{21}=\bs \Sigma_{22}$ and so we derive the following corollary.

\begin{coro}
If $\bo X$ follows the distribution (\ref{case2}) with $\sigma_{11}=\sigma_{22}=1$,  the eigenvalues of $\cov^{-1}(\bo F_{\bo X})\cov_4(\bo F_{\bo X}) $ are such that either:

	\begin{tabular}{ccccrl}
(a) &	$\rho_1(\bo F_{\bo X})~ >$&$ \rho_2(\bo F_{\bo X})  $&$=$&$\dots ~~~~= ~\rho_p(\bo F_{\bo X})$  &if $\epsilon < (k-3)/(3(k-1))$,\\
		(b) &	$\rho_1(\bo F_{\bo X})~=$&$\dots$&$=$&$ \rho_{p-1}(\bo F_{\bo X}) > ~\rho_p(\bo F_{\bo X})$ &if $\epsilon > (k-3)/(3(k-1))$,\\
		(c) & $\rho_1(\bo F_{\bo X}) ~=$&$ \rho_2(\bo F_{\bo X}) $&$=$&$\dots ~~~~ =~ \rho_p(\bo F_{\bo X})$ &if  $\epsilon = (k-3)/(3(k-1))$.
\end{tabular}

\vspace{1mm}

Moreover, if (a) (resp. (b))  holds then the eigenvector of $\cov^{-1}(\bo F_{\bo X})\cov_4(\bo F_{\bo X}) $ associated with $\rho_1(\bo F_{\bo X})$
(resp. $\rho_p(\bo F_{\bo X})$) is proportional to $\bo e_1$.
\end{coro}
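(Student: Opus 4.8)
The plan is to obtain the Corollary as a direct specialization of Proposition~\ref{prop2} to $\sigma_{11}=\sigma_{22}=1$, so that the only real work is an algebraic simplification of the two eigenvalue formulas and a sign analysis. First I would substitute $\sigma_{22}=1$ into the expression for $\rho_2(\bo F_{\bo X})$: since $(1-\epsilon)+\epsilon=1$, the fraction inside collapses and one gets $\rho_2(\bo F_{\bo X})=\frac{1}{p+2}\bigl(3+p-1\bigr)=1$. (This is the expected value: with $\sigma_{22}=1$ the marginal law on the last $p-1$ coordinates is exactly standard Gaussian, so the associated eigenvalue of $\cov^{-1}(\bo F_{\bo X})\cov_4(\bo F_{\bo X})$ must be $1$.) Substituting $\sigma_{11}=1$ into the formula for $\rho_1(\bo F_{\bo X})$ gives $\rho_1(\bo F_{\bo X})=\frac{1}{p+2}\bigl(\frac{3(1-\epsilon)+\epsilon k(k+2)}{((1-\epsilon)+\epsilon k)^2}+p-1\bigr)$.

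The heart of the argument is then the sign of $\rho_1(\bo F_{\bo X})-\rho_2(\bo F_{\bo X})=\rho_1(\bo F_{\bo X})-1$. Because the denominator $((1-\epsilon)+\epsilon k)^2$ is positive, this sign equals the sign of $N(\epsilon):=3(1-\epsilon)+\epsilon k(k+2)-3\bigl((1-\epsilon)+\epsilon k\bigr)^2$, viewed as a quadratic in $\epsilon$. Expanding and collecting terms — routine, so I would not reproduce it in detail — one finds the clean factorization $N(\epsilon)=\epsilon(k-1)\bigl[(k-3)-3(k-1)\epsilon\bigr]$. Hence, for $0<\epsilon<1/2$ and $k>1$, the sign of $N(\epsilon)$, and therefore of $\rho_1(\bo F_{\bo X})-\rho_2(\bo F_{\bo X})$, is exactly the sign of $(k-3)-3(k-1)\epsilon$. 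This gives $\rho_1(\bo F_{\bo X})>\rho_2(\bo F_{\bo X})$, $<$, or $=$ according as $\epsilon<(k-3)/(3(k-1))$, $>$, or $=$, which is precisely cases (a), (b), (c); the equalities among $\rho_2(\bo F_{\bo X}),\dots,\rho_p(\bo F_{\bo X})$ (resp. $\rho_1(\bo F_{\bo X}),\dots,\rho_{p-1}(\bo F_{\bo X})$) and the proportionality of the distinguished eigenvector to $\bo e_1$ are inherited verbatim from Proposition~\ref{prop2}.

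There is no genuinely hard step here; the main thing to be careful about is the bookkeeping in the expansion of $N(\epsilon)$ and keeping track of which factors are positive — one uses $k>1$ to divide by $(k-1)$ without flipping inequalities (this holds in the regime of interest, e.g.\ $k=5$ for the barrow wheel, where $(k-3)/(3(k-1))=1/6$). I would also check the boundary case separately, namely that the threshold $(k-3)/(3(k-1))$ lies in $(0,1/2)$ for the values of $k$ under consideration, so that all three regimes (a)--(c) are actually realizable, and note that when $\epsilon=(k-3)/(3(k-1))$ the factorization of $N(\epsilon)$ forces $\rho_1(\bo F_{\bo X})=\rho_2(\bo F_{\bo X})$, yielding case (c).
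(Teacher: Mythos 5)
Your proof is correct and follows essentially the same route as the paper, which simply states that the eigenvalue expressions of Proposition~\ref{prop2} ``can be easily simplified'' when $\sigma_{11}=\sigma_{22}=1$ and leaves the computation implicit; your substitution giving $\rho_2(\bo F_{\bo X})=1$ and the factorization $N(\epsilon)=\epsilon(k-1)\bigl[(k-3)-3(k-1)\epsilon\bigr]$ check out and supply exactly the omitted details. The sign analysis and the threshold $(k-3)/(3(k-1))$ (e.g.\ $1/6$ for $k=5$) agree with the corollary as stated.
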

The bound $(k-3)/(3(k-1))$ on the contamination is minimum for $k=4$ and equals $1/9$. It increases with $k$ and its limit equals to $1/3\simeq 33\%$ of contamination when $k$ grows to infinity.

\subsection{Case 2b: symmetric contamination of a Gaussian distribution}
We can also mimic the barrow wheel distribution by the following mixture of three Gaussian distributions:
\begin{equation}\label{case2bis}
\bo X \sim (1-\epsilon)~ \mathcal{N}(\bo 0_p, \bs \Sigma_{21} ) + \tfrac{\epsilon}{2}~ \mathcal{N}(\delta \bo e_1, \bs \Sigma_{22}) +\tfrac{\epsilon}{2}~ \mathcal{N}(-\delta \bo e_1, \bs \Sigma_{22})
\end{equation}
with $\bs \Sigma_{21} = \diag(\sigma_{11}^2, \sigma_{12}^2, \dots,   \sigma_{12}^2)$, $\bs \Sigma_{22} = \diag(\sigma_{21}^2, \sigma_{22}^2, \dots,   \sigma_{22}^2)$ and $\delta \neq 0$. \\
In this case, we can derive the following proposition.
\vspace{5pt}
\begin{prop}\label{prop2bis}~\\
Let $\bo X$ follow the distribution (\ref{case2bis}), the eigenvalues of $\cov^{-1}(\bo F_{\bo X})\cov_4(\bo F_{\bo X}) $ are such that either:\\
	\begin{tabular}{ccccr}
		(a) &	$\rho_1(\bo F_{\bo X})~ >$&$ \rho_2(\bo F_{\bo X})  $&$=$&$\dots ~~~~= ~\rho_p(\bo F_{\bo X})$ ,\\
		(b) &	$\rho_1(\bo F_{\bo X})~=$&$\dots$&$=$&$ \rho_{p-1}(\bo F_{\bo X}) > ~\rho_p(\bo F_{\bo X})$,\\
		(c) & $\rho_1(\bo F_{\bo X}) ~=$&$ \rho_2(\bo F_{\bo X}) $&$=$&$\dots ~~~~ =~ \rho_p(\bo F_{\bo X})$.
		\end{tabular}
		~\\
\vspace{5pt}
		~\\
with $\rho_1(\bo F_{\bo X}) = \dfrac{1}{p+2}\left(\dfrac{3(1-\epsilon)\sigma_{11}^4+\epsilon(3\sigma_{21}^4+6\sigma_{21}^2 \delta^2+\delta^4)}{((1-\epsilon)\sigma_{11}^2+\epsilon(\sigma_{21}^2+\delta^2))^2}+p-1\right)$\\
~\\
\vspace{5pt}
and $\rho_2(\bo F_{\bo X}) =\dfrac{1}{p+2}\left(\dfrac{3((1-\epsilon)\sigma_{12}^4+\epsilon \sigma_{22}^4)}{((1-\epsilon)\sigma_{12}^2+\epsilon \sigma_{22}^2)^2}+p-1\right) $.

Moreover, if (a) (resp. (b))  holds then the eigenvector of $\cov^{-1}(\bo F_{\bo X})\cov_4(\bo F_{\bo X}) $ associated with $\rho_1(\bo F_{\bo X})$
(resp. $\rho_p(\bo F_{\bo X})$) is proportional to $\bo e_1$.
\end{prop}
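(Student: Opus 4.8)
The plan is to follow exactly the strategy of the proof of Proposition~\ref{prop2}, exploiting the fact that both $\cov(\bo F_{\bo X})$ and $\cov_4(\bo F_{\bo X})$ are diagonal under the mixture (\ref{case2bis}), so that the eigenvalues of $\cov^{-1}(\bo F_{\bo X})\cov_4(\bo F_{\bo X})$ are simply the ratios of the corresponding diagonal entries, and the eigenvector attached to the ``first'' entry is $\bo e_1$. First I would use the symmetry of (\ref{case2bis}) in $\delta\bo e_1$ and $-\delta\bo e_1$ to get $\mathbb{E}(\bo X)=\bo 0_p$, and then compute the total covariance via the between/within decomposition: the between part is $\frac{\epsilon}{2}(\delta\bo e_1)(\delta\bo e_1)'+\frac{\epsilon}{2}(-\delta\bo e_1)(-\delta\bo e_1)'=\epsilon\delta^2\bo e_1\bo e_1'$ and the within part is $(1-\epsilon)\bs\Sigma_{21}+\epsilon\bs\Sigma_{22}$, so $\cov(\bo F_{\bo X})=\diag(\gamma_1,\gamma_2,\dots,\gamma_2)$ with $\gamma_1=(1-\epsilon)\sigma_{11}^2+\epsilon(\sigma_{21}^2+\delta^2)$ and $\gamma_2=(1-\epsilon)\sigma_{12}^2+\epsilon\sigma_{22}^2$. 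Consequently $\cov(\bo F_{\bo X})^{-1}=\diag(1/\gamma_1,1/\gamma_2,\dots,1/\gamma_2)$ and the squared Mahalanobis distance is $d^2=\tfrac{1}{\gamma_1}X_1^2+\tfrac{1}{\gamma_2}\sum_{l=2}^p X_l^2$.

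Next I would show that $\cov_4(\bo F_{\bo X})=\frac{1}{p+2}\mathbb{E}(d^2\bo X\bo X')$ is diagonal. Conditionally on the mixture component, the coordinates of $\bo X$ are independent and, moreover, $X_2,\dots,X_p$ are each symmetric about $0$ in every component; isolating in $d^2$ the terms involving $X_i$ and $X_j$ for $i\neq j$, every summand of $\mathbb{E}(d^2X_iX_j\mid c)$ carries a factor $\mathbb{E}(X_i\mid c)$ or $\mathbb{E}(X_j\mid c)$ which vanishes whenever the corresponding index lies in $\{2,\dots,p\}$, hence $\mathbb{E}(d^2X_iX_j)=0$. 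It then remains to compute the diagonal entries. For the first one, $\mathbb{E}(d^2X_1^2)=\tfrac1{\gamma_1}\mathbb{E}(X_1^4)+\tfrac1{\gamma_2}\sum_{l\ge2}\mathbb{E}(X_1^2X_l^2)$, I would obtain $\mathbb{E}(X_1^4)$ from the additive property of moments together with $\mathbb{E}[(\mu+\sigma W)^4]=\mu^4+6\mu^2\sigma^2+3\sigma^4$ for $W\sim\mathcal{N}(0,1)$, which gives $\mathbb{E}(X_1^4)=3(1-\epsilon)\sigma_{11}^4+\epsilon(3\sigma_{21}^4+6\sigma_{21}^2\delta^2+\delta^4)$; the mixed terms $\mathbb{E}(X_1^2X_l^2)$, $l\ge2$, factor as $\gamma_1\gamma_2$ just as in the proof of Proposition~\ref{prop2}, so $\mathbb{E}(d^2X_1^2)=\tfrac1{\gamma_1}\mathbb{E}(X_1^4)+(p-1)\gamma_1$. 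In the same way, for $j\ge2$ one gets $\mathbb{E}(X_j^4)=3((1-\epsilon)\sigma_{12}^4+\epsilon\sigma_{22}^4)$ and $\mathbb{E}(d^2X_j^2)=\tfrac1{\gamma_2}\mathbb{E}(X_j^4)+(p-1)\gamma_2$.

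Dividing each diagonal entry of $\cov_4(\bo F_{\bo X})$ by the corresponding entry of $\cov(\bo F_{\bo X})$ then produces a diagonal matrix $\cov^{-1}(\bo F_{\bo X})\cov_4(\bo F_{\bo X})$ whose first entry is $\tfrac{1}{p+2}(\mathbb{E}(X_1^4)/\gamma_1^2+p-1)$, with eigenvector $\bo e_1$, and whose remaining $p-1$ entries all equal $\tfrac{1}{p+2}(\mathbb{E}(X_j^4)/\gamma_2^2+p-1)$; substituting the two fourth moments above gives exactly the announced $\rho_1(\bo F_{\bo X})$ and $\rho_2(\bo F_{\bo X})$. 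Ordering these two values decreasingly yields the trichotomy in the statement: $\rho_1>\rho_2$ gives case (a) with the $\bo e_1$-eigenvector at the top, $\rho_1<\rho_2$ gives case (b) with the $\bo e_1$-eigenvector at the bottom, and $\rho_1=\rho_2$ gives case (c). The part I expect to be the main obstacle is the careful fourth-moment bookkeeping for $X_1$ under the two-sided shift — checking that the odd-order contributions from the $+\delta$ and $-\delta$ components cancel while the even ones coincide — together with a clean justification that the cross-coordinate second moments reduce to $\gamma_1\gamma_2$; once these are settled, verifying that $\cov_4(\bo F_{\bo X})$ is diagonal and comparing $\rho_1$ with $\rho_2$ are routine.
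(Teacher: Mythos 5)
Your proposal reproduces the paper's proof of Proposition~\ref{prop2bis} essentially step for step: the same between/within computation giving $\cov(\bo F_{\bo X})=\diag(\gamma_1,\gamma_2,\dots,\gamma_2)$ with $\gamma_1=(1-\epsilon)\sigma_{11}^2+\epsilon(\sigma_{21}^2+\delta^2)$ and $\gamma_2=(1-\epsilon)\sigma_{12}^2+\epsilon\sigma_{22}^2$, the same diagonal representation of $\cov_4(\bo F_{\bo X})$ through $\mathbb{E}(d^2X_i^2)$, the same mixture-plus-Gaussian fourth-moment bookkeeping yielding $\mathbb{E}(X_1^4)$ and $\mathbb{E}(X_j^4)$, the same reduction of the cross terms to $(p-1)\gamma_1$ (resp.\ $(p-1)\gamma_2$), and the same ratio-of-diagonals argument producing $\rho_1$, $\rho_2$, the eigenvector $\bo e_1$, and the trichotomy. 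If anything you are more explicit than the paper, which simply asserts that ``all the $X_i$ are independent'': your conditional-on-component symmetry argument for the vanishing of the off-diagonal entries of $\cov_4$, and your flagging of the factorization $\mathbb{E}(X_1^2X_l^2)=\gamma_1\gamma_2$ as the step requiring justification, are precisely the points the paper's own proof passes over.
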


\begin{proof}\label{proof2bis}~Let us  compute the eigenvalues of $\cov(\bo F_{\bo X})^{-1} \cov_4(\bo F_{\bo X})$.

\noindent{\textsl{Computation of $\cov(\bo F_{\bo X})^{-1}$:}\\}
For the model (\ref{case2bis}), the expectation is $\mathbb{E}(\bo X) = \bo 0_p$, the within covariance matrix is $\bs \Sigma_W = (1-\epsilon) \bs \Sigma_{21} + \epsilon \bs \Sigma_{22} = \diag((1-\epsilon)\sigma_{11}^2+\epsilon \sigma_{21}^2, ((1-\epsilon)\sigma_{12}^2+\epsilon \sigma_{22}^2), \dots)$ and the between covariance is $\bs \Sigma_B = \epsilon \delta^2 \bo e_1 \bo e_1'$. So,
$$\cov(\bo F_{\bo X}) = \begin{pmatrix} \gamma_1 & \bo 0\\ \bo 0 & \gamma_2\bo I_{p-1} \end{pmatrix} \text{~~and~~}
\cov(\bo F_{\bo X})^{-1} = \begin{pmatrix} 1/\gamma_1 & \bo 0\\ \bo 0 & 1/\gamma_2\bo I_{p-1} \end{pmatrix},$$
with $\gamma_1 = (1-\epsilon)\sigma_{11}^2+\epsilon \sigma_{21}^2 +  \epsilon \delta^2$ and $\gamma_2 = (1-\epsilon)\sigma_{12}^2+\epsilon \sigma_{22}^2$.

\noindent{\textsl{Computation of $\cov_4(\bo F_{\bo X})$:}\\}
As already defined in Proof~\ref{proof2},
  $$\cov_4(\bo F_{\bo X}) =\frac{1}{(p+2)} \diag(\mathbb{E}(d^2 X_1^2), \dots, \mathbb{E}(d^2 X_p^2))$$ where  $d^2=  \tfrac{1}{\gamma_1} X_{1}^2+  \tfrac{1}{\gamma_2}\sum_{l=2}^{p} X_{l}^2$.

\noindent{The first diagonal term is $\mathbb{E}(d^2 X_1^2) = \tfrac{1}{\gamma_1} \mathbb{E}(X_1^4) + (p-1)\gamma_1$.\\}
$\mathbb{E}(X_1^4) $ can be easily expressed since $X_1 \sim (1-\epsilon) Z_1+ \tfrac{\epsilon}{2} Z_2 +\tfrac{\epsilon}{2} Z_3$ with $Z_1 \sim  \mathcal{N}(0, \sigma_{11}^2 )$, $Z_2\sim \mathcal{N}(\delta,  \sigma_{21}^2)$ and $Z_3\sim \mathcal{N}(-\delta, \sigma_{21}^2)$. Then, we  apply the same properties as in Proof~\ref{proof2} and so we obtain $\mathbb{E}(X_1^4)= 3(1-\epsilon)\sigma_{11}^4+\epsilon(3\sigma_{21}^4+6\sigma_{21}^2 \delta^2+\delta^4)$.\\
Finally, $\mathbb{E}(d^2 X_1^2) = \tfrac{1}{\gamma_1} ( 3(1-\epsilon)\sigma_{11}^4+\epsilon(3\sigma_{21}^4+6\sigma_{21}^2 \delta^2+\delta^4)) + (p-1)\gamma_1$.

\noindent{All the other diagonal terms are equal to $\mathbb{E}(d^2 X_j^2) = \tfrac{1}{\gamma_2} \mathbb{E}(X_j^4) + (p-1)\gamma_2$ for $j=2,\dots,p$.\\}
Since $X_j \sim (1-\epsilon) Z_1 + \epsilon Z_2$ with $Z_1 \sim  \mathcal{N}(0, \sigma_{12}^2 )$ and $Z_2\sim \mathcal{N}(0,  \sigma_{22}^2)$, we can apply the same procedure as previously and we obtain: $\mathbb{E}(X_j^4) = 3((1-\epsilon)\sigma_{12}^4+\epsilon \sigma_{22}^4)$ and so, for $j=2,\dots,p$, $\mathbb{E}(d^2 X_j^2) = \tfrac{1}{\gamma_2}(3((1-\epsilon)\sigma_{12}^4+\epsilon \sigma_{22}^4)) + (p-1)\gamma_2$.

\noindent{\textsl{Computation of $\cov(\bo F_{\bo X})^{-1}\cov_4(\bo F_{\bo X})$:}\\}
Now we can express $\cov(\bo F_{\bo X})^{-1}\cov_4(\bo F_{\bo X})$ as:
$$\cov(\bo F_{\bo X})^{-1}\cov_4(\bo F_{\bo X}) = \frac{1}{(p+2)} \begin{pmatrix}
\tfrac{1}{\gamma_1}\mathbb{E}(d^2 X_1^2)& \bo 0\\
\bo 0 &
 \tfrac{1}{\gamma_2}\mathbb{E}(d^2 X_j^2) \bo I_{p-1}\end{pmatrix} $$
So, the eigenvalues of $\cov(\bo F_{\bo X})^{-1}\cov_4(\bo F_{\bo X})$ are simply its diagonal terms and the eigenvector associated with $\mathbb{E}(d^2 X_1^2)/ \gamma_1$
is $\bo e_1$.\\
If $\tfrac{1}{\gamma_1}\mathbb{E}(d^2 X_1^2) > \tfrac{1}{\gamma_2}\mathbb{E}(d^2 X_j^2)$ then $\rho_1(\bo F_{\bo X}) >  \rho_2(\bo F_{\bo X})$,\\
\centerline{
  with: $\rho_1(\bo F_{\bo X}) = \dfrac{1}{p+2}\left(\dfrac{3(1-\epsilon)\sigma_{11}^4+\epsilon(3\sigma_{21}^4+6\sigma_{21}^2 \delta^2+\delta^4)}{((1-\epsilon)\sigma_{11}^2+\epsilon(\sigma_{21}^2+\delta^2))^2}+p-1\right)$ }\\
~\\
\vspace{5pt}
\centerline{and $\rho_2(\bo F_{\bo X}) =\dfrac{1}{p+2}\left(\dfrac{3((1-\epsilon)\sigma_{12}^4+\epsilon \sigma_{22}^4)}{((1-\epsilon)\sigma_{12}^2+\epsilon \sigma_{22}^2)^2}+p-1\right)$. }\\
And so Proposition \ref{prop2bis} is proven.
\end{proof}

The eigenvalues expression above can be easily simplified in the case of equal covariance matrices $\bs \Sigma_{21}=\bs \Sigma_{22}$. Moreover, in this case, given that the ICS method is affine invariant, we do not need to assume diagonal matrices. We have thus the following corollary where the condition for the three cases depends only on the percentage of contamination $\epsilon$ and not on the location parameter $\delta$.

\begin{coro}
 For $\bo X$ simulated as in (\ref{case2bis}) with $\bs \Sigma_{21} = \bs \Sigma_{22}$ but not necessarily diagonal, the eigenvalues of $\cov(\bo F_{\bo X})^{-1} \cov_4(\bo F_{\bo X})$ are such that either:\\
	\begin{tabular}{ccccrl}
		(a) &	$\rho_1(\bo F_{\bo X})~ >$&$ \rho_2(\bo F_{\bo X})  $&$=$&$\dots ~~~~= ~\rho_p(\bo F_{\bo X})$  &if $\epsilon < 1/3$,\\
		(b) &	$\rho_1(\bo F_{\bo X})~=$&$\dots$&$=$&$ \rho_{p-1}(\bo F_{\bo X}) > ~\rho_p(\bo F_{\bo X})$ &if $\epsilon > 1/3$,\\
		(c) & $\rho_1(\bo F_{\bo X}) ~=$&$ \rho_2(\bo F_{\bo X}) $&$=$&$\dots ~~~~ =~ \rho_p(\bo F_{\bo X})$ &if  $\epsilon = 1/3$.
		\end{tabular}

Moreover, if (a) (resp. (b))  holds then the eigenvector of $\cov^{-1}(\bo F_{\bo X})\cov_4(\bo F_{\bo X}) $ associated with $\rho_1(\bo F_{\bo X})$
(resp. $\rho_p(\bo F_{\bo X})$) is proportional to $\bo e_1$.
\end{coro}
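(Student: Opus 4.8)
The plan is to read this statement off from Proposition \ref{prop2bis} by specialization, using the affine invariance of ICS to remove the diagonality hypothesis. Since the quantities $\rho_i(\bo F_{\bo X})$ are the eigenvalues of $\cov^{-1}(\bo F_{\bo X})\cov_4(\bo F_{\bo X})$, they are unchanged by every full rank affine transformation of $\bo X$, so it suffices to establish the trichotomy on the subclass of model (\ref{case2bis}) in which $\bs\Sigma_{21}=\bs\Sigma_{22}$ is diagonal: one passes to that subclass by an affine map that diagonalizes the common scatter matrix while keeping the contamination shifts collinear with a single coordinate axis, so that the image is again an instance of (\ref{case2bis}). On this subclass Proposition \ref{prop2bis} applies verbatim, with the parameter matching $\sigma_{11}=\sigma_{21}$ and $\sigma_{12}=\sigma_{22}$.

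Next I would substitute $\sigma_{11}=\sigma_{21}$ and $\sigma_{12}=\sigma_{22}$ into the closed forms provided by Proposition \ref{prop2bis}. In the $p-1$ ``noise'' directions the marginal is a mixture of normals with the same mean, hence just $\mathcal N(0,\sigma_{12}^2)$, so the ratios collapse and $\rho_2(\bo F_{\bo X})=\cdots=\rho_p(\bo F_{\bo X})=\tfrac{1}{p+2}(3+p-1)=1$, whereas
$$\rho_1(\bo F_{\bo X})=\frac{1}{p+2}\left(\frac{3\sigma_{11}^4+\epsilon(6\sigma_{11}^2\delta^2+\delta^4)}{(\sigma_{11}^2+\epsilon\delta^2)^2}+p-1\right).$$
The decisive comparison $\rho_1(\bo F_{\bo X})>\rho_2(\bo F_{\bo X})$ then amounts, after clearing the positive denominator $(\sigma_{11}^2+\epsilon\delta^2)^2$, to $3\sigma_{11}^4+6\epsilon\sigma_{11}^2\delta^2+\epsilon\delta^4>3\sigma_{11}^4+6\epsilon\sigma_{11}^2\delta^2+3\epsilon^2\delta^4$, i.e. to $\epsilon\delta^4>3\epsilon^2\delta^4$; since $\epsilon>0$ and $\delta\neq0$ this is equivalent to $\epsilon<1/3$, with equality iff $\epsilon=1/3$ and the opposite strict inequality iff $\epsilon>1/3$. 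Matching these three mutually exclusive regimes with cases (a), (b), (c) of Proposition \ref{prop2bis} yields the stated trichotomy, the distinguished eigenvector being $\bo e_1$ exactly as identified there; transporting this back through the affine map of the first step gives the corollary.

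There is no genuine analytic difficulty here: once Proposition \ref{prop2bis} is available the corollary is a short algebraic reduction. The two points deserving a little care are, first, verifying the affine reduction of the opening step — that a single map can both diagonalize the shared scatter matrix and keep the contamination shifts on a coordinate axis, so that one stays inside template (\ref{case2bis}) — and, second, recording in the simplified formulas that both the cut-off $\epsilon=1/3$ and the size of the eigenvalue gap are completely independent of the shift magnitude $\delta$ (and of $\sigma_{11},\sigma_{12}$), which is precisely the clean phenomenon the corollary is meant to capture.
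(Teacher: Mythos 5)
Your proposal is correct and follows exactly the route the paper intends: the corollary is obtained by specializing Proposition~\ref{prop2bis} to $\sigma_{11}=\sigma_{21}$, $\sigma_{12}=\sigma_{22}$ (whence $\rho_2=\dots=\rho_p=1$ and the comparison reduces to $\epsilon\delta^4>3\epsilon^2\delta^4$, i.e.\ $\epsilon<1/3$), with affine invariance used to drop the diagonality assumption. Your algebra checks out and matches the paper's (largely implicit) argument, so there is nothing further to add.
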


\subsection{Case 5: scale-shift outlier model}
Let $\bo X = (\bo X_1, \dots ,\bo X_p)'$ be a $p$-multivariate real random vector and assume the distribution
of $\bo X$ is a mixture of two Gaussian distributions with the same location parameters but with a scale change:
\begin{equation}\label{case5}
\bo X \sim (1-\epsilon) \mathcal{N}(\bo 0_p, \bo I_p) + \epsilon \mathcal{N}(\bo 0_p, \bs \Sigma_5)
\end{equation}
with $\epsilon<0.5$, $\bs \Sigma_5= \diag(\alpha \bo I_q, \bo I_{p-q})$, $q<p$ and $\alpha >1$.\\
This model generates outliers in up to $q$ directions via a scale-shift. In this context, the following proposition arises.

\vspace{5pt}
\begin{prop}\label{prop:case5}~\\
Let $\bo X$ follow the distribution (\ref{case5}), then the eigenvalues of $\cov^{-1}(\bo F_{\bo X})\cov_4(\bo F_{\bo X}) $ are such that:
$$\rho_1(\bo F_{\bo X}) = \dots = \rho_q(\bo F_{\bo X}) > \rho_{q+1}(\bo F_{\bo X})  =\dots ~~~~= ~\rho_p(\bo F_{\bo X})$$

Moreover, the eigenvectors of $\cov^{-1}(\bo F_{\bo X})\cov_4(\bo F_{\bo X}) $ associated with the $q$ largest eigenvalues span the subspace spanned by $\{\bo e_1,\ldots,\bo e_q\}$.
\end{prop}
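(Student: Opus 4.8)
The plan is to follow exactly the strategy used in the proofs of Propositions~\ref{prop2} and~\ref{prop2bis}: since both mixture components of (\ref{case5}) are centered and have diagonal covariances, everything reduces to computing the (two distinct) diagonal entries of $\cov(\bo F_{\bo X})^{-1}\cov_4(\bo F_{\bo X})$, after which the eigenstructure is read off directly.

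First I would note that $\mathbb{E}(\bo X)=\bo 0_p$, so the total covariance equals the within covariance: $\cov(\bo F_{\bo X}) = (1-\epsilon)\bo I_p+\epsilon\bs\Sigma_5 = \diag(\gamma_1\bo I_q,\bo I_{p-q})$ with $\gamma_1 = 1-\epsilon+\epsilon\alpha$, hence $\cov(\bo F_{\bo X})^{-1}=\diag(\gamma_1^{-1}\bo I_q,\bo I_{p-q})$ and $d^2=d(\bo X)^2 = \gamma_1^{-1}\sum_{i=1}^q X_i^2+\sum_{i=q+1}^p X_i^2$. Next I would argue that $\cov_4(\bo F_{\bo X})=\frac{1}{p+2}\,\mathbb{E}(d^2\bo X\bo X')$ is diagonal: conditionally on the (unobserved) mixture label the coordinates of $\bo X$ are independent and centered Gaussian, so each off-diagonal term $\mathbb{E}(d^2 X_jX_k)$ with $j\neq k$ is a sum of conditional moments involving an odd power of an independent centered Gaussian, and therefore vanishes. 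Thus $\cov_4(\bo F_{\bo X})=\frac{1}{p+2}\diag(\mathbb{E}(d^2X_1^2),\dots,\mathbb{E}(d^2X_p^2))$, and by exchangeability within the two blocks only two values occur.

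The diagonal entries are then obtained by the same bookkeeping as in the proof of Proposition~\ref{prop2}, using $\mathbb{E}(Z^4)=3\sigma^4$ for $Z\sim\mathcal{N}(0,\sigma^2)$, conditioning on the mixture label, and additivity of moments over the mixture. Writing $a=(1-\epsilon)+\epsilon\alpha^2$, one finds for $j\leq q$ that $\mathbb{E}(d^2X_j^2)=\gamma_1^{-1}(q+2)a+(p-q)\gamma_1$, hence $\rho_1(\bo F_{\bo X})=\frac{1}{p+2}\bigl(\gamma_1^{-2}(q+2)a+p-q\bigr)$, while for $j>q$ (where $X_j\sim\mathcal{N}(0,1)$ is in fact independent of the remaining coordinates) one finds $\mathbb{E}(d^2X_j^2)=p+2$, i.e. $\rho_{q+1}(\bo F_{\bo X})=1$. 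Since $\cov(\bo F_{\bo X})^{-1}\cov_4(\bo F_{\bo X})=\diag(\rho_1(\bo F_{\bo X})\bo I_q,\rho_{q+1}(\bo F_{\bo X})\bo I_{p-q})$, its eigenvalues are exactly these diagonal entries and, once strict inequality is established, the eigenspace of the larger eigenvalue is precisely $\mathrm{span}\{\bo e_1,\dots,\bo e_q\}$.

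It remains to check $\rho_1(\bo F_{\bo X})>\rho_{q+1}(\bo F_{\bo X})$, which reduces to $a>\gamma_1^2$; expanding, $a-\gamma_1^2=\epsilon(1-\epsilon)(\alpha-1)^2>0$ since $0<\epsilon<1/2$ and $\alpha>1$ (equivalently, this is the strict Jensen inequality for the two-point variable taking the values $1$ and $\alpha$). This yields $\rho_1(\bo F_{\bo X})=\dots=\rho_q(\bo F_{\bo X})>\rho_{q+1}(\bo F_{\bo X})=\dots=\rho_p(\bo F_{\bo X})$ together with the claimed description of the eigenvectors. I expect no genuine obstacle here; the only point requiring care is the moment bookkeeping in the diagonal computation, in particular keeping track of the dependence structure — the last $p-q$ coordinates are mutually independent and independent of the first $q$, whereas the first $q$ coordinates are exchangeable but coupled through the common mixture indicator, so $\mathbb{E}(X_i^2X_j^2)\neq\mathbb{E}(X_i^2)\mathbb{E}(X_j^2)$ for $i\neq j\leq q$.
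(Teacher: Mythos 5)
Your proposal is correct and follows the same route as the paper's proof: reduce everything to the two distinct diagonal entries of $\cov(\bo F_{\bo X})^{-1}\cov_4(\bo F_{\bo X})$ and compare them. One point is worth flagging: your moment bookkeeping is in fact more careful than the paper's. The paper writes $\mathbb{E}(d^2 X_1^2) = \tfrac{1}{\gamma_1}\mathbb{E}(X_1^4) + (p-1)\gamma_1$, which implicitly takes $\mathbb{E}(X_l^2 X_1^2) = \mathbb{E}(X_l^2)\mathbb{E}(X_1^2) = \gamma_1^2$ for $2 \le l \le q$; but, as you correctly observe, the first $q$ coordinates are only conditionally independent given the mixture label, so $\mathbb{E}(X_l^2 X_1^2) = (1-\epsilon)+\epsilon\alpha^2 = a \neq \gamma_1^2$ whenever $\epsilon\in(0,1)$ and $\alpha\neq 1$. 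Your value $\mathbb{E}(d^2X_1^2)=\gamma_1^{-1}(q+2)a+(p-q)\gamma_1$ is the right one (the two expressions coincide only for $q=1$). The discrepancy is harmless for the proposition as stated, since both versions reduce the ordering of the eigenvalues to the same inequality $a>\gamma_1^2$, i.e.\ $\epsilon(1-\epsilon)(\alpha-1)^2>0$ (the paper's displayed factorization $(1-\alpha)^2(1-\epsilon)>0$ also drops a factor of $\epsilon$), and the eigenvector claim follows identically from the block-diagonal structure. So your argument is not merely a valid alternative; it is the corrected form of the paper's own computation.
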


\noindent{Note that if $q=p$ then all the eigenvalues are equal and ICS is not informative and leads to the Mahalanobis distance.}
\vspace{5pt}
\begin{rem}
In the simulation framework, Case 5 is similar to model (\ref{case5}) with $\alpha=5$ and thus if $p>6$, $\rho_1(\bo F_{\bo X}) = \dots = \rho_6(\bo F_{\bo X}) > \rho_{7}(\bo F_{\bo X})  =\dots ~~~~= ~\rho_p(\bo F_{\bo X})$ and the outliers are highlighted  on the first six components, independently of the percentage of contamination~$\epsilon$.
\end{rem}

\begin{proof}~Let us  compute the eigenvalues of $\cov(\bo F_{\bo X})^{-1} \cov_4(\bo F_{\bo X})$.

\noindent{\textsl{Computation of $\cov(\bo F_{\bo X})^{-1}$:}\\}
For the model (\ref{case5}), the expectation is $\mathbb{E}(\bo X) = \bo 0_p$, the within covariance matrix is
 $\bs \Sigma_W = \diag(\gamma_1 \bo I_q,  \bo I_{p-q})$, with $\gamma_1=(1-\epsilon)+\alpha \epsilon$,
and the between covariance is $\bs \Sigma_B = \bo 0_p$. So,
$$\cov(\bo F_{\bo X}) = \bs \Sigma_W  \begin{pmatrix} \gamma_1 \bo I_q & \bo 0\\ \bo 0 &\bo I_{p-q} \end{pmatrix} \text{~~and~~}
\cov(\bo F_{\bo X})^{-1} = \begin{pmatrix} 1/\gamma_1 \bo I_q & \bo 0\\ \bo 0 & \bo I_{p-q} \end{pmatrix},$$
with $\gamma_1 = (1-\epsilon)+\alpha \epsilon$.

\noindent{\textsl{Computation of $\cov_4(\bo F_{\bo X})$:}\\}
As already defined in Proof~\ref{proof2},
  $$\cov_4(\bo F_{\bo X}) =\frac{1}{(p+2)} \diag(\mathbb{E}(d^2 X_1^2), \dots, \mathbb{E}(d^2 X_p^2))$$ where  $d^2=  \tfrac{1}{\gamma_1} \sum_{l=1}^{q}  X_{l}^2+  \sum_{j=q+1}^{p} X_{j}^2$.

The first $q$ diagonal terms are equal to $\mathbb{E}(d^2 X_l^2) = \mathbb{E}(d^2 X_1^2)$ for $l=1,\dots,q$ and  $\mathbb{E}(d^2 X_1^2) = \tfrac{1}{\gamma_1} \mathbb{E}(X_1^4) + (p-1)\gamma_1$.
$\mathbb{E}(X_1^4) $ can be easily expressed since $X_1 \sim (1-\epsilon) Z_1+ \epsilon Z_2 $ with $Z_1 \sim  \mathcal{N}(0, 1)$ and $Z_2\sim \mathcal{N}(0,  \alpha)$. Then, we  apply the same properties as in Proof~\ref{proof2} and so we obtain $\mathbb{E}(X_1^4)= 3(1+\epsilon(\alpha^2-1))$. Finally, $\mathbb{E}(d^2 X_1^2) = \tfrac{3}{\gamma_1} (1+\epsilon(\alpha^2-1)) + (p-1)\gamma_1$.

\noindent{All the other $p-q$ diagonal terms are equal to $\mathbb{E}(d^2 X_j^2) =  \mathbb{E}(d^2 X_{q+1}^2) = \mathbb{E}(X_j^4) + (p-1)$ for $j=q+1,\dots,p$. Since $X_{q+1} \sim \mathcal{N}(0,1)$, $\mathbb{E}(X_{q+1}^4) = 3$ and so,  $\mathbb{E}(d^2 X_{q+1}^2) = 3 + (p-1) = p+2$.}

\noindent{\textsl{Computation of $\cov(\bo F_{\bo X})^{-1}\cov_4(\bo F_{\bo X})$:}\\}
Now we can express $\cov(\bo F_{\bo X})^{-1}\cov_4(\bo F_{\bo X})$ as:
$$\cov(\bo F_{\bo X})^{-1}\cov_4(\bo F_{\bo X}) = \frac{1}{(p+2)} \begin{pmatrix}
\tfrac{1}{\gamma_1}\mathbb{E}(d^2 X_1^2) \bo I_q& \bo 0\\
\bo 0 &
 \tfrac{1}{\gamma_2}\mathbb{E}(d^2 X_{q+1}^2) \bo I_{p-q}\end{pmatrix} $$
So, the eigenvalues of $\cov(\bo F_{\bo X})^{-1}\cov_4(\bo F_{\bo X})$ are simply its diagonal terms and the vector space spanned by the eigenvectors associated with $\mathbb{E}(d^2 X_1^2)/ \gamma_1$ is the one spanned by $\{\bo e_1,\ldots,\bo e_q\}$.\\
If $\tfrac{1}{\gamma_1}\mathbb{E}(d^2 X_1^2) > \mathbb{E}(d^2 X_{q+1}^2)$ then $\rho_1(\bo F_{\bo X}) = \dots =  \rho_q(\bo F_{\bo X}) > \rho_{q+1}(\bo F_{\bo X})=\dots= \rho_p(\bo F_{\bo X})$.
This condition is equivalent to: $ \tfrac{3}{\gamma_1^2} (1+\epsilon(\alpha^2-1)) + (p-1)>   p+2 \Leftrightarrow \tfrac{1}{\gamma_1^2} (1+\epsilon(\alpha^2-1)) >  1$ with $\gamma_1 = (1-\epsilon)+\alpha \epsilon$. It leads to the following inequality: $(1-\alpha)^2(1-\epsilon)>0$ which is true for $\alpha>1$ and so the outliers are always revealed on the first $q$ components, as long as $\alpha>1$. If $\alpha=1$ then all the eigenvalues are equal and ICS fails to detect the structure of outlierness.
\end{proof}

\begin{rem}
Given the affine equivariance of ICS, the result can be generalized to the following mixture model where $\bs \Sigma_{51}$  (resp. $\bs \Sigma_{52}$) is a $q\times q$ (resp. $(p-q)\times (p-q)$) matrix not necessarily diagonal:
\begin{equation*}
\bo X \sim (1-\epsilon) \, \mathcal{N}\left(\bs \mu \left(\begin{array}{cc}
\bs \Sigma_{51} & 0 \\
 0 & \bs \Sigma_{52}
 \end{array}
 \right)
 \right) + \epsilon \, \mathcal{N}\left(\bs \mu, \left(\begin{array}{cc}
\alpha \bs \Sigma_{51} & 0 \\
 0 & \bs \Sigma_{52}
 \end{array}
 \right)
 \right)
\end{equation*}
with $\epsilon <0.5$, $\mu$ is a $p$ vector, $q<p$ and $\alpha >1$.

\end{rem}

\end{document}